\newtheorem{theo}{Theorem}
\newtheorem{prop}{Proposition}
\newacronym{FCFS}{FCFS}{First-Come, First-Served}
\newacronym{FIFO}{FIFO}{First-In, First-Out}
\newacronym{PS}{PS}{Processor Sharing}
\newacronym{GPS}{GPS}{Generalized Processor Sharing}
\tikzset{
	every node/.style = {minimum size = .7cm},
	every edge/.append style = {draw, ->, >=stealth},
	class/.style = {draw, rounded corners=.05cm},
	type/.style = {class, densely dashed},
	server/.style = {draw, circle, minimum size=.9cm},
	pool/.style = {rounded corners=.05cm},
}
\pgfplotsset{
tikzDefaults/.style={
		enlargelimits=0,
		ymin = 0,
		xmin = 0,
		mark size=2pt,
		line width = 1,
		xticklabel style = {font=\footnotesize, yshift=.1cm},
		yticklabel style = {font=\footnotesize, xshift=.1cm},
		label style = {font=\footnotesize},
		legend style = {font=\footnotesize},
},
tikzLocality/.style={
		enlargelimits=0,
		ymin = 0,
		xmin = 0,
		xticklabel style = {font=\footnotesize, yshift=.1cm},
		yticklabel style = {font=\footnotesize, xshift=.1cm},
		label style = {font=\footnotesize},
		legend style = {font=\footnotesize},
		line width = 1,
		x label style={at={(axis description cs:0.5,-0.1)}},
		y label style={at={(axis description cs:-0.1,.5)}},
	},
	ring/.style = {green!70!black, densely dotted},
	line/.style = {myblue},
	nonloc/.style = {myorange, densely dashed},
	line_c/.style = {myblue, densely dashdotted},
  table/search path={.}
}
\pgfplotsset{compat=newest}
\renewcommand\footnotemark{}
\begin{document}

\title{Performance of Balanced Fairness in Resource Pools: A~Recursive Approach}

\author[1]{Thomas Bonald\thanks{The authors are members of LINCS, see \href{http://www.lincs.fr}{http://www.lincs.fr}.}}
\author[2,1]{C\'eline Comte}
\author[2]{Fabien Mathieu \thanks{Emails: thomas.bonald@telecom-paristech.fr, \{\underline{celine.comte},fabien.mathieu\}@nokia.com}}
\affil[1]{T\'el\'ecom ParisTech, Paris-Saclay University, France}
\affil[2]{Nokia Bell Labs, France}

\subtitle{Sigmetrics 2018 - POMACS (Author version)}
\date{}
\maketitle

\begin{abstract}
  Understanding the performance of a pool of servers is crucial for proper dimensioning.
  One of the main challenges is to take into account the complex interactions between servers that 
  are pooled to process jobs.
  In particular, a job can generally not be processed by any server of the cluster due to various constraints like data locality.
  In this paper, we represent these  constraints  by some assignment graph between jobs and servers.  
  We present a recursive approach   to computing  performance metrics like  mean response times when the server capacities are shared according to balanced fairness.
  While the computational cost of these formulas can be exponential in the number of servers in the worst case,
  we illustrate their practical interest by introducing broad classes of pool structures that can be exactly analyzed in polynomial time.
  This extends considerably the class of models for which explicit performance metrics  are accessible. \\[.2cm]
  \textit{Keywords:} Balanced fairness, parallel computing, performance evaluation. \\[.2cm]
  \textit{CSS 2012:} Mathematics of computing $\to$ Markov processes ; Computer systems organization $\to$ Cloud computing\\[.2cm]
  \textit{DOI:} \url{https://doi.org/10.1145/3154500}
\end{abstract}

\section{Introduction}\label{sec:introduction}
Today, computing infrastructures
consist of thousands of servers interacting in a complex way.
For example, MapReduce is able to process massive data sets by distributing the load 
over a large number of computers where data are located \cite{LY12}. Similarly, the Berkeley Open Infrastructure for Network Computing (BOINC~\cite{BOINC}) offers a generic infrastructure to disseminate various tasks requiring multiple types of resources (CPU, memory, bandwidth, storage,\ldots) over a large pool of heterogeneous devices (computers, game consoles,\ldots).
This approach based on resource pooling also emerges
in content delivery networks, where file replication 
allows requests to be satisfied by multiple servers concurrently. Download tools like JDownloader\footnote{\url{http://jdownloader.org}} are a basic example of this technique: they can accelerate the download of a large file by retrieving different pieces of that file, called chunks, in parallel over multiple hosting servers.
In these new paradigms where resources are not isolated anymore,
the performance of the underlying scheduling policy is still poorly understood,
so that the service providers often rely on over-dimensioning to guarantee proper quality of service.
There is a clear need to better understand the impact of scheduling and load on response times in large
resource pools.

In this paper, we consider a pool of servers whose resources (like CPU or bandwidth) are shared dynamically by ongoing jobs. 
Each job can only be processed by some subset of servers,
which represents various constraints like data locality. 
Grouping jobs in classes so that all jobs of the same class are served by the same subset of servers,
these constraints can be represented as an assignment graph between job classes and servers. 
We assume that the service capacities are shared according to \emph{balanced fairness} \cite{BP03-1} under these constraints, as considered in \cite{SV15,SV16} in the context of content-delivery networks. 

Balanced fairness, which is closely related to proportional fairness \cite{M07},  has the double practical interest of leading to explicit expressions for the stationary distribution, due to the reversibility of the underlying Markov process, and to have the  insensitivity property, in the sense that the stationary distribution does not depend on the service-time distribution beyond the mean. 
Thus, it is often considered as a desirable sharing objective, yielding simple and robust performance results.
Moreover, it has recently been shown  that balanced fairness naturally emerges from some simple scheduling policies. 
A first example is the \emph{redundant requests} approach introduced in 
 \cite{G16,G17-1,G17-2}, where a given job is replicated over all servers that can process it, and the first instance to complete stops the others. Redundant requests are well suited to jobs that cannot be parallelized due to their nature or time scale, like the elementary tasks of a fine-grain computation running in a computer cluster. When jobs can be sliced into multiple chunks, balanced fairness can also be achieved under \emph{parallel processing}, as shown in  \cite{BC17-1}. This is typically the case in BOINC, where a given task can be split into work units, or in content distribution networks, where small parts of a file can be retrieved independently.
 
Unfortunately, just knowing the expression of the stationary distribution is not enough to derive performance metrics like the mean response times, even just numerically.
As usual, this requires the computation of the normalizing constant, which is a notoriously hard problem \cite{H85}.
This is why existing results consider either small systems (e.g., 3 servers)
or symmetric systems (e.g., all servers have the same service rate and each job can be processed by $k$ servers chosen uniformly at random) \cite{G16,G17-1,G17-2}.
The notion of poly-symmetry has recently been introduced to enlarge the class of tractable models but it still relies on some specific (poly-)symmetric properties of the system \cite{BCV17}.

The main contribution of this paper is a new recursive approach for computing
the normalizing constant (equivalently, the probability that the system is empty)
and thus the mean response times of systems under balanced fairness.
Our recursive formula applies to any constraint structure, that is, any assignment graph between job classes and servers.
Of course, its complexity depends on the degree of symmetry of the system, but it is not limited to strictly symmetric or poly-symmetric models.
In particular, we exhibit two large classes of assignment graphs
where the complexity of the formula is polynomial in the number of servers, instead of exponential:
\emph{randomized} assignments and \emph{local} assignments.
We show that these classes can be seen as generalizations of examples previously identified and analyzed in \cite{G17-1,G17-2} in the context of redundant requests,
and we illustrate them by a number of new examples that are practically interesting and computationally tractable.
Thus, our work extends considerably the set of systems for which closed-form performance metrics are accessible,
which will hopefully provide very useful insights into the behavior of  large-scale resource pools.

The rest of the paper is organized as follows.
In the following section,
we introduce the model of a resource pool under balanced fairness
and describe the sequential implementation of this policy.
The recursive formula is presented in Section \ref{sec:results-statement}.
The applications to {randomized} assignment and {local} assignment
are presented in Sections \ref{sec:randomizeds} and \ref{sec:ranged-clusters}, respectively.
Numerical results are provided in Section \ref{sec:numerical-evaluation}.
Section \ref{sec:conclusion} concludes the paper.

\section{Resource pool}\label{sec:model}

We consider a model of resource pool that applies to a large variety of systems, like computer clusters or content delivery networks.

\subsection{Model}\label{sec:cluster-model}

Consider a resource pool with $I$ job classes and $K$ servers.
The sets of class and server indices are denoted by $\mathcal{I}$ and $\mathcal{K}$, respectively.
For each $i \in \mathcal{I}$, class-$i$ jobs enter the system as a Poisson process with rate $\lambda_i$.
The corresponding vector of arrival rates is denoted by ${\lambda} = (\lambda_i: i \in \mathcal{I})$.
Each job leaves the system as soon as its service is complete.
For each $k \in \mathcal{K}$, the service capacity of server $k$ is denoted by $\mu_k$.

The class of a job defines the set of servers that are assigned to this job. 
It may be determined by practical constraints like data locality
or result from some  load balancing scheme, as explained in Sections \ref{sec:randomizeds} and \ref{sec:ranged-clusters}.
For each $i \in \mathcal{I}$,
$\mathcal{K}_i \subset \mathcal{K}$
denotes the set of servers assigned
to each job of class $i$.
Reciprocally, for each
$k \in \mathcal{K}$,
$\mathcal{I}_k \subset \mathcal{I}$
denotes the set of job classes that are assigned server $k$,
i.e., $i \in \mathcal{I}_k$ if and only if $k \in \mathcal{K}_i$.
All assignments can be described by a bipartite
graph between classes and servers, where there is an edge between class $i$ and server $k$
if and only if server $k$ is assigned to class $i$.
We assume without loss of generality that each server is assigned at least one class.
An example,  referred to as the M model  \cite{G16}, is shown in Figure \ref{fig:toy}: there are  3 servers and 2 job classes;
servers $1$ and $2$ are dedicated to job classes $1$ and $2$ respectively,
while server $3$ can serve both classes.

\begin{figure}[ht]
\centering
\begin{tikzpicture}[draw, node distance=2cm,auto,>=stealth]
  \pgfmathtruncatemacro\n{2}

  \pgfmathtruncatemacro\s{\n+1}
  \foreach \i in {1,...,\s}{
    \pgfmathsetmacro\x{(\i*1.5)-1.5}
    \node[server] (s\i) at (\x,0) {};
  }
  \foreach \i in {1,...,\n}{
    \pgfmathsetmacro\x{(\i*1.5)-.75}
    \node[class] (l\i) at (\x,1.6) {$\lambda_\i$};
    \pgfmathtruncatemacro\j{\i+1}
    \draw (l\i) edge (s\i) edge (s\j);
  }

  \node at (s1) {$\mu_1$};
  \node at (s2) {$\mu_3$};
  \node at (s3) {$\mu_2$};

  \node[anchor=east] at ($(s3)+(2.2,0)$) {Servers};
  \node[anchor=east] at ($(s3)+(2.2,1.6)$) {Job classes};
\end{tikzpicture}
	\caption{Example of assignment graph: the M model}
		\label{fig:toy}
\end{figure}
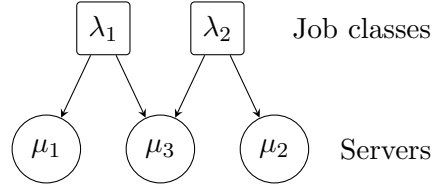

The system state is described by the vector $x = (x_i : i \in \mathcal{I})$ of numbers of jobs of each class.
Throughout the paper,  a server is  said to be idle if it has no job to process while a class is  said to be idle if it has no ongoing job (neither queued nor in service).

Our main notations are summarized in Table \ref{tab:notation}.

\begin{table}[ht]
  \centering
  \begin{tabular}{|>{\centering}m{.08\linewidth}|p{.33\linewidth}|>{\centering}m{.08\linewidth}|p{.33\linewidth}|}
    \hline 
    \multicolumn{2}{|c|}{Servers}
    & \multicolumn{2}{c|}{Job classes} \\ 
    \hline 
    $\mathcal{K}$ & Set of servers &
    $\mathcal{I}$ & Set of classes \\ 
    \hline 
    $K$ & Number of servers &
    $I$ & Number of classes \\ 
    \hline 
    $\mu_k$ & Service rate of server $k$ &
    $\lambda_i$ & Arrival rate of class-$i$ jobs \\ 
    \hline 
    $M(\mathcal{L})$ & Total service rate of servers $\mathcal{L}\subset \mathcal{K}$ &
    $\Lambda(\mathcal{A})$ &  Total arrival rate of classes $\mathcal{A} \subset \mathcal{I}$ \\
    \hline 
    \multicolumn{4}{c}{ } \\
    \hline
    \multicolumn{4}{|c|}{Assignment graph} \\ 
    \hline 
    $\mathcal{K}_i$ & \multicolumn{3}{l|}{Servers assigned to class-$i$ jobs} \\ 
    \hline 
    $\mathcal{I}_k$ & \multicolumn{3}{l|}{Classes that are assigned server $k$} \\ 
    \hline
    $ |-k $ & \multicolumn{3}{p{.82\linewidth}|}{System reduced to servers in $ \mathcal{K}\setminus \{k\} $ and to jobs that are not assigned server $ k $} \\
    \hline
    $ |\mathcal{L} $ & \multicolumn{3}{p{.82\linewidth}|}{System reduced to servers in $ \mathcal{L}$ and to jobs that are assigned servers in $ \mathcal{L} \subset \mathcal{K} $ only} \\
    \hline
    \multicolumn{4}{c}{ } \\
    \hline
    \multicolumn{4}{|c|}{Performance metrics} \\ 
    \hline 
    $\psi$ & \multicolumn{3}{l|}{Probability that the system is idle} \\
    \hline
    $ L $ & \multicolumn{3}{l|}{Mean number of jobs} \\
    \hline
  \end{tabular} 
  \caption{\label{tab:notation} Table of notation.}
\end{table}

\subsection{Balanced fairness}
\label{sec:bf}

We assume that 
the resources are allocated according to balanced fairness  \cite{BP03-1}.
Each server shares its service capacity among its assigned jobs, in a way that depends on the system state $x$.
For each $i \in \mathcal{I}$,
we denote by $\phi_i(x)$  the total service rate received  by class-$i$ jobs  in state $x$. This is the sum of all service rates allocated by servers in $\mathcal{K}_i$ to class-$i$ jobs in state $x$. 
Under balanced fairness, all class-$i$ jobs are assumed to be served at the same rate, namely  $\phi_i(x) / x_i$ in any state $x$ such that  $x_i > 0$ (we detail in \S \ref{subsec:sequential} how this allocation may be achieved in practice.).
We adopt the convention that $\phi_i(x) = 0$ if $x_i = 0$.

The capacity set $\mathcal{C}$ of the system is defined as the set of all feasible allocations,
$$
\mathcal{C} = \left\{
  \phi \in \mathbb{R}_+^\mathcal{I}:~
  \sum_{i \in \mathcal{A}} \phi_i
  \le M\left( \bigcup_{i \in \mathcal{A}} \mathcal{K}_i \right),~
  \forall \mathcal{A} \subset \mathcal{I}
\right\},
$$
where for any $ \mathcal{L} \subset  \mathcal{K}$,
$M( \mathcal{L} ) = \sum_{k\in  \mathcal{L} } \mu_k$ denotes the total service capacity of the servers in $ \mathcal{L} $.
Under balanced fairness, the service rates are given by
\begin{equation*}
  \phi_i(x) = \frac{ \Phi(x-e_i) }{ \Phi(x) },
  \quad \forall x \in \mathbb{N}^\mathcal{I},
  \quad \forall i: x_i > 0,
\end{equation*}
where $e_i$ is the $I$-dimensional vector with $1$ in component $i$ and $0$ elsewhere,
and $\Phi$ is the balance function,
defined recursively by $\Phi(0) = 1$ and
\begin{equation}
  \label{eq:Phi}
  \Phi(x) = \frac1{ M\left( \bigcup_{i:x_i > 0} \mathcal{K}_i \right) }
  \sum_{i: x_i > 0} \Phi(x-e_i),
  \quad \forall x \neq 0.
\end{equation}
Observe that the corresponding vector $\phi(x)$ belongs to the capacity set $\mathcal{C}$. Moreover,
$$
\sum_{i \in \mathcal{I}} \phi_i(x) = M\left( \bigcup_{i:x_i > 0} \mathcal{K}_i \right),
$$
so that each non-idle server  is fully utilized.

Now if job sizes are i.i.d.~exponential with unit mean, the underlying Markov process is reversible and the stationary measure of the system state is  given by
\begin{equation}
  \label{eq:pix}
  \pi(x) = \pi(0) \Phi(x) \lambda^x,
  \quad \forall x \in \mathbb{N}^\mathcal{I},
\end{equation}
where we use the notation $\lambda^x = \prod_{i \in \mathcal{I}} {\lambda_i}^{x_i}$.

\subsection{Sequential implementation of balanced fairness}\label{subsec:sequential}

Balanced fairness assumes that each server has the ability to arbitrarily split its capacity. Yet, many real-life servers can only process jobs sequentially.
We now show how to conciliate the two viewpoints by considering a sequential implementation that behaves like balanced fairness,
although each server processes only one job at a time, in \gls{FCFS} order.
In details, we exhibit two variants, introduced respectively in \cite{G16} and \cite{BC17-1}:

\begin{description}
  \item[Redundant requests.]
   Each class-$i$ job is replicated over all servers in $\mathcal{K}_i$. When a job is in service on several servers at the same time, each of these servers  works on a copy of the job, independently of the other servers.
    The service times of  the  copies of the same job are independent and exponential, with parameter $\mu_k$ at server $k$.
    A job leaves the system whenever any of its copies has completed its service, so after an exponential time with parameter 
    $\sum_{k\in \mathcal{A}}\mu_k$, where $\mathcal{A}$ is the (time-varying) set of servers working on the same job;
    the services of all other copies of this job are then interrupted
    and these copies are removed from the corresponding queues.
  \item[Parallel processing.]
    When a job is in service on several servers at the same time,
    these servers are pooled to process (a single copy of) this job in parallel.
    The parallel processing is assumed to be perfect,
    so that the service rate of a job is the sum of the service capacities of the servers that are processing it.
    The  job sizes are i.i.d.~exponential  with unit mean, so that the service time of a job  is exponential with parameter $\sum_{k\in \mathcal{A}}\mu_k$, where $\mathcal{A}$ is the (time-varying) set of servers processing this job.
\end{description}
These two variants are described by the same Markov process but rely on different assumptions. In the first model, the servers are independent and the work done by some server on some copy of a job cannot be used by the other servers; the gain of redundancy relies on the independence of the service times of the copies of each job and their specific (exponential) distribution. In the second model, the servers need to coordinate to process the same job, so that the work done by a server  doesn't need to be done by other servers. This coordination can be achieved by dividing each job into small chunks that are distributed dynamically among active servers, say by some master server elected at the job arrival.

The system state defines a Markov process provided it includes  the arrival order of jobs.
We consider the sequence $c = (c_1,\ldots,c_n)$ of job classes in order of arrival of the jobs,
where $n$ is the total number of jobs in the system
and $c_p \in \mathcal{I}$ is the class of the job in position $p$, for each $p = 1,\ldots,n$, so that job in position 1 is the oldest job of the system.
The state space is the set $\mathcal{I}^*$ of all finite sequences on $\mathcal{I}$.
The corresponding queueing model is an order-independent queue, as introduced in \cite{B96,K11}.
The  system state has the following stationary measure $ \pi $ \cite{B96,K11,G16,BC17-1}:
\begin{equation}
  \label{eq:pic}
  \pi(c) = \pi(\emptyset) \prod_{p=1}^n \frac{\lambda_{c_p}}{ M\left( \bigcup_{q=1}^p \mathcal{K}_{c_{q}} \right) },
  \quad \forall c = (c_1,\ldots,c_n) \in \mathcal{I}^*.
\end{equation}
Now consider the aggregate state $x = (x_i: i \in \mathcal{I})$ of the number of jobs of each class,
independently of the order of arrival of these jobs.
With a slight abuse of notation, we also denote by $\pi$ the stationary measure of this aggregate state and get
\begin{equation}
  \label{eq:piagg}
  \pi(x) = \sum_{c:|c|=x} \pi(c),
  \quad \forall x \in \mathbb{N}^\mathcal{I},
\end{equation}
where $|c|$ denotes the aggregate state associated to state $c$.
The following result shows that this stationary measure is also that obtained under balanced fairness.
The proof is borrowed from \cite{B96,K11,BC17-1}.

\begin{prop} 
 \label{equivalence}
  The stationary measures \eqref{eq:pix} and \eqref{eq:piagg} coincide.
\end{prop}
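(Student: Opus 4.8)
The plan is to show directly that the aggregated order-independent measure \eqref{eq:piagg} has the product form \eqref{eq:pix}, by exhibiting the balance function inside it. First I would substitute \eqref{eq:pic} into \eqref{eq:piagg}. For any sequence $c$ with $|c| = x$, the product of numerators $\prod_{p=1}^n \lambda_{c_p}$ equals $\lambda^x$ regardless of the order, since $c$ contains exactly $x_i$ jobs of each class $i$. Factoring this out of the sum yields
\[
  \pi(x) = \pi(\emptyset)\, \lambda^x \sum_{c:|c|=x} \prod_{p=1}^n \frac{1}{M\!\left(\bigcup_{q=1}^p \mathcal{K}_{c_q}\right)}.
\]
It therefore suffices to prove that the remaining sum, which I will call $\tilde\Phi(x)$, coincides with the balance function $\Phi$ of \eqref{eq:Phi}; identifying $\pi(\emptyset)$ with $\pi(0)$ (the empty sequence being the only one with $|c|=0$) then gives the claim.

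Next I would verify that $\tilde\Phi$ obeys the same recursion as $\Phi$. The initial condition is immediate, $\tilde\Phi(0) = 1$, as the unique sequence with $|c|=0$ is the empty one, whose product is empty. For $x \neq 0$, the key observation is that every sequence $c$ with $|c|=x$ contains at least one job of each class present in $x$, so its full server set satisfies $\bigcup_{q=1}^n \mathcal{K}_{c_q} = \bigcup_{i:x_i>0}\mathcal{K}_i$. Hence the last factor (the $p=n$ term) of each product equals $1/M\!\left(\bigcup_{i:x_i>0}\mathcal{K}_i\right)$ independently of the ordering, and can be pulled out of the sum as a common factor.

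Finally I would partition the sequences with $|c|=x$ according to the class $i = c_n$ of their most recently arrived job, for which necessarily $x_i > 0$. Deleting this last job sets up a bijection between such sequences and the sequences $c'$ with $|c'| = x - e_i$, under which the truncated product $\prod_{p=1}^{n-1}$ becomes exactly the product defining $\tilde\Phi(x-e_i)$. Summing over the admissible last classes then gives
\[
  \tilde\Phi(x) = \frac{1}{M\!\left(\bigcup_{i:x_i>0}\mathcal{K}_i\right)} \sum_{i:x_i>0} \tilde\Phi(x-e_i),
\]
which is precisely \eqref{eq:Phi}. Since $\tilde\Phi$ and $\Phi$ share the same initialization and the same recursion, they are equal by induction on the total number of jobs $n = \sum_i x_i$, so \eqref{eq:piagg} reduces to \eqref{eq:pix}. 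I expect no genuine analytic difficulty here; the one point requiring care is the bookkeeping in this last step, namely checking that ``remove the last job'' is a true bijection onto the shorter sequences and that the index set $\{i : x_i > 0\}$ of the recursion matches exactly the set of admissible last classes. That reindexing, together with the order-independence of the final factor, is the crux of the argument.
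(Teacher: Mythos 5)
Your proof is correct and follows essentially the same route as the paper's: substitute \eqref{eq:pic} into \eqref{eq:piagg}, factor out $\lambda^x$, and identify the remaining sum over sequences with the balance function by partitioning on the last class $c_n$, which recovers the recursion \eqref{eq:Phi}. You simply make explicit the details the paper leaves implicit (the order-independence of the final factor $1/M\bigl(\bigcup_{i:x_i>0}\mathcal{K}_i\bigr)$, the removal-of-last-job bijection, and the induction on $\sum_i x_i$), all of which are accurate.
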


\begin{proof}
  For each $x \in \mathbb{N}^N$, the stationary measure \eqref{eq:pic} satisfies:
  \begin{align*}
    \sum_{c:|c|=x} \pi(c)
    &= \sum_{c:|c|=x} \pi(\emptyset) \prod_{p=1}^n \frac{\lambda_{c_p}}{ M\left( \bigcup_{q=1}^p \mathcal{K}_{c_{q}} \right) }
    = \pi(0)
    \left( \sum_{c:|c|=x} \prod_{p=1}^n \frac1{ M\left( \bigcup_{q=1}^p \mathcal{K}_{c_{q}} \right)} \right)
    \lambda^x.
  \end{align*}
  The result follows by letting
  \begin{align*}
    \Phi(x)
    &= \sum_{c:|c|=x} \prod_{p=1}^n \frac1{M \left( \bigcup_{q=1}^p \mathcal{K}_{c_{q}} \right)}
  \end{align*}
  and partitioning the sum depending on the value of $c_n$, which gives \eqref{eq:Phi}.
\end{proof}

In view of Proposition \ref{equivalence}, the results derived in the rest of the paper
equally predict the performance under balanced fairness and the sequential scheduling described above.
Proposition \ref{prop:average} goes one step further
by showing that balanced fairness is the average per-class resource allocation obtained under the sequential scheduling.
The proof can be found in \cite{BC17-1}.

\begin{prop}
  \label{prop:average}
  For each $i \in \mathcal{I}$,
  the mean service rate of class-$i$ jobs under the above sequential scheduling, conditioned on the number of jobs of each class in the system,
  is the service rate obtained under balanced fairness:
  $$
  \phi_i(x) = \sum_{c:|c|=x} \frac{\pi(c)}{\pi(x)}
  \sum_{\substack{p=1 \\ c_p = i}}^n
  \left(
  \mu(c_1,\ldots,c_p) - \mu(c_1,\ldots,c_{p-1})
  \right).
  $$
\end{prop}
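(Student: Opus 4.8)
The plan is to reduce the statement to a purely algebraic identity on the balance function $\Phi$, and then to verify that identity by the same last-arrival decomposition used to prove Proposition \ref{equivalence}. Writing $\mu(c_1,\ldots,c_p) = M\!\left(\bigcup_{q=1}^p \mathcal{K}_{c_q}\right)$ for the partial capacity along the sequence and $w(c) = \prod_{p=1}^n 1/\mu(c_1,\ldots,c_p)$, the proof of Proposition \ref{equivalence} already exhibits $\Phi(x) = \sum_{c:|c|=x} w(c)$, while \eqref{eq:pic} and \eqref{eq:pix} give $\pi(c)/\pi(x) = w(c)/\Phi(x)$ for every $c$ with $|c|=x$. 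Denoting by $r_i(c) = \sum_{p:c_p=i}\bigl(\mu(c_1,\ldots,c_p)-\mu(c_1,\ldots,c_{p-1})\bigr)$ the inner sum in the statement, the right-hand side therefore equals $S_i(x)/\Phi(x)$ with $S_i(x) = \sum_{c:|c|=x} w(c)\,r_i(c)$. Since $\phi_i(x) = \Phi(x-e_i)/\Phi(x)$, it suffices to prove the identity $S_i(x) = \Phi(x-e_i)$ for every $x$ with $x_i > 0$ (the case $x_i = 0$ being trivial, both sides vanishing).

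To establish this identity I would derive a recursion for $S_i$ by partitioning the sequences $c$ with $|c|=x$ according to their last entry $c_n = j$, exactly as in the proof of Proposition \ref{equivalence}. Writing $c = (c',j)$ with $|c'| = x-e_j$, the weight factorizes as $w(c) = w(c')/\bar{M}(x)$, where $\bar{M}(x) = M\!\left(\bigcup_{i:x_i>0}\mathcal{K}_i\right)$ is the full capacity attached to the support of $x$, independent of which $j$ is appended. The point is to track how $r_i$ changes under appending: if $j \neq i$ the class-$i$ rate is unchanged, $r_i(c) = r_i(c')$, whereas if $j = i$ the new last job contributes the extra term $\mu(c_1,\ldots,c_n)-\mu(c_1,\ldots,c_{n-1}) = \bar{M}(x) - \bar{M}(x-e_i)$, so that $r_i(c) = r_i(c') + \bigl(\bar{M}(x)-\bar{M}(x-e_i)\bigr)$. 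Collecting the two cases yields
\begin{equation*}
  \bar{M}(x)\,S_i(x) = \sum_{j:x_j>0} S_i(x-e_j) + \bigl(\bar{M}(x)-\bar{M}(x-e_i)\bigr)\Phi(x-e_i),
\end{equation*}
together with the base case $S_i(e_i) = w\bigl((i)\bigr)\,r_i\bigl((i)\bigr) = 1 = \Phi(0)$.

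It then remains to check that $x \mapsto \Phi(x-e_i)$ satisfies this same recursion. Substituting $S_i(x-e_j) = \Phi(x-e_i-e_j)$ and cancelling the surplus term, the claim reduces to $\bar{M}(x-e_i)\,\Phi(x-e_i) = \sum_{j:x_j>0}\Phi(x-e_i-e_j)$, which is precisely the defining recursion \eqref{eq:Phi} of $\Phi$ evaluated at $x-e_i$. I expect the only delicate point to be the boundary case $x_i = 1$: there the index set $\{j:x_j>0\}$ contains $i$ while the support of $x-e_i$ does not, but the extra summand $\Phi(x-2e_i)$ has a negative component and vanishes by convention, so the two summation ranges agree; one must likewise confirm that the $j=i$ contribution $\bar{M}(x)-\bar{M}(x-e_i)$ correctly vanishes when $x_i \ge 2$, since the trailing class-$i$ job then finds all of $\mathcal{K}_i$ already occupied by an older class-$i$ job. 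This bookkeeping—reconciling the summation ranges of the $S_i$-recursion and the $\Phi$-recursion at the support boundary—is the main obstacle; the rest is the mechanical last-arrival decomposition already carried out for Proposition \ref{equivalence}.
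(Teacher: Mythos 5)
Your proof is correct, but note that the paper itself contains no proof of this proposition: it is stated with a pointer to \cite{BC17-1}, so the only in-paper material to compare against is the last-arrival decomposition used for Proposition \ref{equivalence}, which is exactly the machinery you reuse. Your reduction is sound: $\pi(c)/\pi(x) = w(c)/\Phi(x)$ follows from \eqref{eq:pic} and \eqref{eq:pix}, so the claim is equivalent to $S_i(x) = \Phi(x-e_i)$, and your recursion $\bar{M}(x)\,S_i(x) = \sum_{j:x_j>0} S_i(x-e_j) + \bigl(\bar{M}(x)-\bar{M}(x-e_i)\bigr)\Phi(x-e_i)$ is the correct consequence of conditioning on $c_n$, since $\mu(c_1,\ldots,c_n)=\bar{M}(x)$ for every such $c$ and, when $c_n=i$, the increment equals $\bar{M}(x)-\bar{M}(x-e_i)$ (which indeed vanishes automatically when $x_i\ge 2$, as the supports of $x$ and $x-e_i$ then coincide). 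The boundary analysis you flag as delicate does go through: for $x_i=1$ the surplus term exactly converts the prefactor $\bar{M}(x)$ into $\bar{M}(x-e_i)$, after which the identity is \eqref{eq:Phi} evaluated at $x-e_i$, with the spurious $j=i$ summand killed by the convention $\Phi=0$ on vectors with a negative component. The one point to make explicit in a final write-up is that the induction hypothesis must include $S_i(y)=0$ whenever $y_i=0$ (consistent with $\Phi(y-e_i)=0$ under the convention); your substitution $S_i(x-e_i)=\Phi(x-2e_i)$ at $x_i=1$ is licensed by this observation, not by the main hypothesis $S_i(y)=\Phi(y-e_i)$ for $y_i>0$, and you effectively acknowledge this but should state it as part of the induction. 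What your self-contained argument buys, compared with the paper's deferral, is that the equivalence between the time-average per-class rates of the \gls{FCFS}-type sequential system and balanced fairness is seen to be a purely combinatorial identity on $\Phi$, proved with nothing beyond the decomposition already present in the proof of Proposition \ref{equivalence}.
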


This proposition relates the average per-class service rates
but it does not say anything about the rate perceived by each job.
However, it is observed in \cite{BC17-1} that
balanced fairness can be effectively realized
in sequential systems by enforcing frequent job interruptions and resumptions
on top of the \gls{FCFS} scheduling.
This extends the way \gls{PS} policy can be implemented by a round-robin scheduler in the single-server case.
In the queueing model,
these interruptions and resumptions are modeled by adding random routing,
which leaves unchanged the stationary measure of the system state.
If the interruptions are enough frequent,
all jobs tend to be served at the same rate on average,
which is precisely the service rate $\phi_i(x) / x_i$ considered in balanced fairness.

Additionally, interrupting jobs frequently allows to reach some approximate insensitivity to the job size distribution.
In the limit,  the resource allocation is exactly balanced fairness and
the assumption of exponential service times is not required anymore.
In fact, it is not even necessary to assume unit mean job sizes.
The results remain the same for any mean job sizes provided $\lambda_i$ is interpreted as the traffic intensity of class $i$
(quantity of work brought by class-$i$ jobs per time unit)
rather than the arrival rate of class-$i$ jobs.

\subsection{Stability condition}\label{sec:stab}

It is known that balanced fairness stabilizes the system whenever
the vector of arrival rates $\lambda$ lies in the interior of the capacity set ${\mathcal C}$ \cite{BMPV06}.
This shows that the system is stable whenever
\begin{equation*}
  \Lambda\left( \mathcal{A} \right) < M\left( \bigcup_{i \in \mathcal{A}} \mathcal{K}_i \right),
  \quad \forall \mathcal{A} \subseteq \mathcal{I} ~\text{with}~ \mathcal{A} \neq \emptyset,
\end{equation*}
where for any $ \mathcal{A} \subset  \mathcal{I}$, $\Lambda(\mathcal{A})   = \sum_{i\in  \mathcal{A} } \lambda_i$ denotes the total arrival rate of classes $ \mathcal{A} $.
Equivalently, focusing on servers instead of jobs,  the stability condition can be written 
\begin{equation}
  \label{eq:stability}
  \Lambda\left( \mathcal{I} \setminus \bigcup_{k \in \mathcal{K} \setminus \mathcal{L}} \mathcal{I}_k \right)
  < M(\mathcal{L}),
  \quad \forall \mathcal{L} \subseteq \mathcal{K} ~\text{with}~ \mathcal{L} \neq \emptyset.
\end{equation}
We assume that this condition is satisfied in the following
and we let $\pi$ denote the stationary distribution of the system state.

\subsection{Performance metrics}\label{sec:steady-state-behavior}

We are interested in the mean response time $T_i$ of each job of class $i$.
By Little's law, we have $T_i = L_i/\lambda_i$, where $L_i$ denotes the mean number  of jobs of class $i$.
It follows from \eqref{eq:pix} that
$$
L_i = \sum_{x \in \mathbb{N}^\mathcal{I}} x_i \pi(x)
= \pi(0) \sum_{x \in \mathbb{N}^\mathcal{I}} x_i \Phi(x) \lambda^x.
$$
This expression involves the probability
\begin{equation*}
  \psi
  = \pi(0)
  = \frac 1 {\sum_{x \in \mathbb{N}^\mathcal{I}} \Phi(x) {\lambda}^x}
\end{equation*}
that the system is empty,
which is the inverse of the normalizing constant.
Considering $\psi$ as a function of $\lambda$, the mean numbers of jobs follow by taking the derivative.
This result was already stated for balanced fairness in \cite{BV04}.
The proof is recalled for the sake of completeness.

\begin{prop}
  For each $i \in \mathcal{I}$, we have
    \begin{equation}
      \label{eq:defmean}
      L_i = \psi \lambda_i  \frac{\partial \left(\frac 1 \psi\right)}{\partial \lambda_i},
      \quad \forall i \in \mathcal{I}.
    \end{equation}
\end{prop}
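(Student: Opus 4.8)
The plan is to recognize that the quantity $1/\psi = \sum_{x \in \mathbb{N}^\mathcal{I}} \Phi(x) \lambda^x$ is exactly the normalizing constant, and that it is a power series in the variables $(\lambda_i : i \in \mathcal{I})$ whose coefficients $\Phi(x)$ do not depend on $\lambda$. The statement then amounts to rewriting the expression for $L_i$ recalled just above, namely $L_i = \psi \sum_{x \in \mathbb{N}^\mathcal{I}} x_i \Phi(x) \lambda^x$, as a derivative of this power series. So the whole proof is essentially a termwise differentiation argument.

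Concretely, I would first write $1/\psi = G(\lambda)$ with $G(\lambda) = \sum_{x} \Phi(x) \lambda^x$ and observe that, since $\lambda^x = \prod_{j \in \mathcal{I}} \lambda_j^{x_j}$ and $\Phi(x)$ is independent of $\lambda$, each monomial satisfies
$$
\frac{\partial \lambda^x}{\partial \lambda_i} = \frac{x_i}{\lambda_i} \, \lambda^x,
\qquad \forall i \ \text{with}\ \lambda_i > 0.
$$
Differentiating $G$ term by term and multiplying by $\lambda_i$ would then give
$$
\lambda_i \frac{\partial G}{\partial \lambda_i}
= \sum_{x \in \mathbb{N}^\mathcal{I}} x_i \Phi(x) \lambda^x,
$$
so that substituting this back into $L_i = \psi \sum_x x_i \Phi(x) \lambda^x$ yields $L_i = \psi \lambda_i \, \partial(1/\psi)/\partial \lambda_i$, which is precisely \eqref{eq:defmean}. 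The case $\lambda_i = 0$ is trivial, since class $i$ then carries no jobs and both sides vanish.

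The step requiring care is the interchange of differentiation and the infinite summation. I would justify it by invoking the stability condition \eqref{eq:stability}: under this condition $\lambda$ lies in the interior of the capacity set, which guarantees that the normalizing constant $G(\lambda) = \sum_x \Phi(x) \lambda^x$ is finite, and in fact that the series converges on a neighborhood of the operating point. A power series with nonnegative coefficients that converges on an open region is differentiable there, and its derivative is obtained by termwise differentiation; this legitimizes the manipulation above and is the only genuinely analytic point in an otherwise purely algebraic argument. I expect this convergence/interchange justification to be the main (and essentially the sole) obstacle, the rest being a direct computation.
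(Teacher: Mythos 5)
Your proposal is correct and follows essentially the same route as the paper, whose one-line proof rewrites $L_i = \psi \sum_x x_i \Phi(x)\lambda^x$ as $\psi\lambda_i \sum_{x:x_i>0} x_i \Phi(x)\lambda^{x-e_i}$ --- precisely the termwise differentiation of $1/\psi$ that you carry out. Your explicit treatment of the $\lambda_i=0$ case and the justification of the derivative--sum interchange via convergence on a neighborhood of a point interior to the capacity set are details the paper leaves implicit, but they do not constitute a different argument.
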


\begin{proof}
  Let $i \in \mathcal{I}$.
  We have successively
  \begin{align*}
    L_i
    &= \psi \sum_{x \in \mathbb{N}^\mathcal{I}} x_i \Phi(x) {\lambda}^x
    = \psi\lambda_i  \sum_{x: x_i > 0} x_i \Phi(x) {\lambda}^{x-e_i}
    =\psi \lambda_i  \frac{\partial \left(\frac 1 \psi\right)}{\partial \lambda_i}.
  \end{align*}
\end{proof}

\section{Recursive formula}\label{sec:results-statement}

We now present the main result of the paper, that is a recursive formula for computing the probability $\psi$ that the system is empty. 
We then derive other recursive formulas for the mean number of jobs of each class.

\subsection{Conditioning}
\label{subsec:conditioning}

The key idea of the recursion is to  condition on the fact that some server $k \in \mathcal{K}$ is idle. Observing that server $k$ is idle if and only if there are no active  jobs of classes 
$\mathcal{I}_k$, this occurs with probability:
\begin{equation}\label{eq:psik}
\psi_k = \psi  \sum_{x:\sum_{i \in \mathcal{I}_k} x_i = 0} \Phi(x) {\lambda}^x.
\end{equation}

Now consider the same pool of servers but without any traffic generated by jobs of classes $\mathcal{I}_k$, that is, for the vector of  arrival rates   ${\lambda}_{|-k}$  defined by  $\lambda_{i|-k} = \lambda_i 1_{\{i\not \in  \mathcal{I}_k\}}$ for all $i\in \mathcal{I}$.
The stationary distribution of the state of this reduced system is
$$
\pi_{|-k}(x) = \psi_{|-k} \Phi(x) {\lambda_{|-k}}^x,
$$
where $\psi_{|-k}$ is the probability that this system is empty, given by
$$
\psi_{|-k} = \frac 1{ \sum_{x} \Phi(x) {\lambda^x_{|-k}}}.
$$
Note that $\psi_{|-k} = 1$ if $\mathcal{I} = \mathcal{I}_k$.
In view of \eqref{eq:psik}, we have:
\begin{equation}
  \label{eq:subsystem}
  \psi = \psi_k \psi_{|-k},
\end{equation}
so that $\psi_{|-k}$ can also be interpreted as the probability that the initial system is empty given that server $k$ is idle.
Similarly, $\pi_{|-k}$ can be viewed as the conditional stationary distribution of the system state, given that server $k$ is idle.
All our results rely on this simple but powerful observation.

\subsection{Probability of an empty system}
\label{subsec:normalization}

The following theorem relates the probability that the system is empty
to the conditional probability that it is empty, given that some server is idle.
This gives a method to compute $\psi$ recursively.

\begin{theo}
  \label{theo:empty}
  The probability that the system is empty is given by
    \begin{equation}
      \label{eq:emptyderivative}
      \psi = \frac
      { M(\mathcal{K}) - \Lambda(\mathcal{I}) }
      { \sum_{k \in \mathcal{K}} \frac{\mu_k}{\psi_{|-k}} }\text{,}
    \end{equation}
  which can also be expressed as
  \begin{equation}
    \label{eq:empty}
    \psi = (1 - \rho) \frac{ M(\mathcal{K}) }{ \sum_{k \in \mathcal{K}} \frac{\mu_k}{\psi_{|-k}} },
  \end{equation}
  where $\rho = \frac{ \Lambda(\mathcal{I}) }{ M(\mathcal{K}) }$ is the overall load of the system.
\end{theo}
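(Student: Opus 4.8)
The plan is to compute the weighted sum
$S = \sum_{x} M\bigl(\bigcup_{i:x_i>0}\mathcal{K}_i\bigr)\,\Phi(x)\,\lambda^x$
in two different ways and equate the results. Writing $G = 1/\psi = \sum_x \Phi(x)\lambda^x$ for the normalizing constant, both evaluations will express $S$ in terms of $G$ and the quantities $\psi_{|-k}$, and the theorem will drop out after a single rearrangement.

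For the first evaluation, I would use the defining recursion \eqref{eq:Phi}, rewritten as $M\bigl(\bigcup_{i:x_i>0}\mathcal{K}_i\bigr)\Phi(x) = \sum_{i:x_i>0}\Phi(x-e_i)$ for $x\neq0$ (the term $x=0$ contributes nothing, since the associated server set is empty and $M(\emptyset)=0$). Substituting this into $S$, swapping the order of summation to sum over the class $i$ first, and performing the shift $y = x - e_i$ — which sends the constraint $x_i>0$ to $y$ ranging over all of $\mathbb{N}^\mathcal{I}$ and produces a factor $\lambda_i$ — each class contributes $\lambda_i\,G$. Hence $S = \Lambda(\mathcal{I})\,G$.

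For the second evaluation, I would expand $M\bigl(\bigcup_{i:x_i>0}\mathcal{K}_i\bigr) = \sum_{k\in\mathcal{K}}\mu_k\,\mathbf{1}\{\sum_{i\in\mathcal{I}_k}x_i>0\}$, using the observation from Section \ref{subsec:conditioning} that server $k$ is idle in state $x$ precisely when $\sum_{i\in\mathcal{I}_k}x_i=0$. Writing the indicator as $1$ minus the idle indicator and summing against $\Phi(x)\lambda^x$ gives $S = \sum_{k\in\mathcal{K}}\mu_k\bigl(G - \sum_{x:\sum_{i\in\mathcal{I}_k}x_i=0}\Phi(x)\lambda^x\bigr)$. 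The inner sum equals $1/\psi_{|-k}$: on the states where all classes of $\mathcal{I}_k$ are empty the weights $\lambda^x$ and $\lambda_{|-k}^x$ agree (the vanishing components of $\lambda_{|-k}$ sit exactly on $\mathcal{I}_k$), so this sum is $\sum_x \Phi(x)\lambda_{|-k}^x = 1/\psi_{|-k}$; equivalently, combine $\psi_k = \psi\sum_{x:\sum_{i\in\mathcal{I}_k}x_i=0}\Phi(x)\lambda^x$ with $\psi = \psi_k\psi_{|-k}$. Thus $S = M(\mathcal{K})\,G - \sum_{k\in\mathcal{K}}\mu_k/\psi_{|-k}$.

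Equating the two expressions for $S$ yields $(M(\mathcal{K})-\Lambda(\mathcal{I}))\,G = \sum_{k\in\mathcal{K}}\mu_k/\psi_{|-k}$, and since $G=1/\psi$ this rearranges into \eqref{eq:emptyderivative}; factoring $M(\mathcal{K})$ out of the numerator and recognizing $\rho = \Lambda(\mathcal{I})/M(\mathcal{K})$ gives \eqref{eq:empty}. The only step that requires genuine care — the main obstacle — is correctly identifying the inner sum in the second evaluation as $1/\psi_{|-k}$ rather than $\psi_{|-k}$ or $\psi_k$; this hinges on the fact that restricting to states with $\sum_{i\in\mathcal{I}_k}x_i=0$ makes the full-system and reduced-system weights coincide, which is precisely the content of the conditioning identity \eqref{eq:subsystem}.
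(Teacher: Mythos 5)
Your proof is correct, and it reaches the result by a route that is genuinely different in execution from the paper's, even though the two arguments are two faces of the same identity. The paper's proof is two lines: it invokes the flow-conservation equation $\Lambda(\mathcal{I}) = \sum_{k \in \mathcal{K}} \mu_k (1-\psi_k)$ as a known probabilistic fact (arrival rate equals mean service rate, using that every non-idle server is fully utilized), rewrites it as \eqref{eq:conservation}, and combines it with the conditioning identity \eqref{eq:subsystem}. Your double counting of $S = \sum_x M\bigl(\bigcup_{i:x_i>0}\mathcal{K}_i\bigr)\Phi(x)\lambda^x$ \emph{proves} that conservation equation rather than assuming it: multiplying your first evaluation $S = \Lambda(\mathcal{I})\,G$ by $\psi$ says precisely that the mean total service rate $\sum_x \pi(x) \sum_i \phi_i(x)$ equals the total arrival rate, derived here purely from the balance recursion \eqref{eq:Phi}, the index shift $y = x - e_i$, and a nonnegative interchange of sums; and your second evaluation, multiplied by $\psi$, is exactly the server-by-server decomposition $\sum_k \mu_k(1-\psi_k)$, with your identification of the idle-state sum as $1/\psi_{|-k}$ playing the role of \eqref{eq:subsystem} (indeed it reproduces how the paper obtains \eqref{eq:subsystem} in \S\ref{subsec:conditioning} from \eqref{eq:psik}). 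What your approach buys is self-containedness --- no appeal to an ergodic flow-balance argument, everything algebraic from \eqref{eq:Phi} and the definition of $\psi_{|-k}$; what it costs is length and one hypothesis you should state explicitly: all manipulations require $G = \sum_x \Phi(x)\lambda^x < \infty$, which is exactly the standing stability assumption of \S\ref{sec:stab} (with that, every interchange is justified by Tonelli since all terms are nonnegative). You also correctly navigated the one delicate point you flagged: on states with $\sum_{i\in\mathcal{I}_k} x_i = 0$ the weights $\lambda^x$ and $\lambda_{|-k}^x$ coincide, so the inner sum is $1/\psi_{|-k}$ and not $\psi_{|-k}$ or $\psi_k$.
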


\begin{proof}
	We first write the conservation equation, which states that the total arrival rate  must be equal to the total average service rate  (accounting for idle periods):
	\begin{equation*}
    \Lambda(\mathcal{I}) = \sum_{k \in \mathcal{K}} \mu_k (1-\psi_k)\text{,}
	\end{equation*}
  that is,
	\begin{equation}
    \label{eq:conservation}
    \sum_{k \in \mathcal{K}} \mu_k \psi_k
    = M(\mathcal{K}) - \Lambda(\mathcal{I}).
  \end{equation}
  Combining \eqref{eq:subsystem} and \eqref{eq:conservation} yields \eqref{eq:emptyderivative}, from which \eqref{eq:empty} follows.
\end{proof}

The probability $\psi$ can be computed by recursively applying \eqref{eq:emptyderivative} or \eqref{eq:empty},
conditioning on the server activity.
The base case of the recursion corresponds to a pool without any input, which is idle with probability $1$.
For each set $\mathcal{L} \subset \mathcal{K}$ of active servers,
we need to evaluate $M(\mathcal{L})$ and
$$
\Lambda\left( \mathcal{I} \setminus \bigcup_{k \in \mathcal{K} \setminus \mathcal{L}} \mathcal{I}_k \right),
$$
which takes $O(I + K)$ operations, where $I$ is the number of job classes and $K$ the number of servers.
The overall complexity is thus in $O\left( (I+K) 2^K \right)$ in the worst case.
Sections \ref{sec:randomizeds} and \ref{sec:ranged-clusters} give practically interesting examples 
where the complexity is polynomial in the number of servers thanks to symmetries or topological properties.

Theorem \ref{theo:empty} and its proof reveal some important properties of the system, which we briefly detail here.

\paragraph{Stability}
The stability condition \eqref{eq:stability} appears when expanding recursion \eqref{eq:emptyderivative}:
the system is stable if and only if
its conditional probability of being empty is positive,
given any set of idle servers.

\paragraph{Resource pooling}
Assume complete resource pooling, that is, $\mathcal{K}_i = \mathcal{K}$ for all $i \in \mathcal{I}$. Then, balanced fairness coincides with \gls{PS} policy while the sequential scheduling coincides with \gls{FIFO} policy, where whenever a job is served, it is served by all servers. In other words, the system boils down to an M/M/1 queue of load $\rho$.

This queue is empty with probability $1 - \rho$, which is the first factor in \eqref{eq:empty}.
In general, the second factor, which can be written:
$$
\frac{\sum_{k \in \mathcal{K}} \mu_k}{ \sum_{k \in \mathcal{K}} \frac{\mu_k}{\psi_{|-k}} },
$$
quantifies the overhead due to  incomplete resource pooling. This is the harmonic mean of the conditional probabilities $\psi_{|-k}$ for $k \in \mathcal{K}$,
weighted by the service rates $\mu_k$ for $k \in \mathcal{K}$.

\paragraph{Activity rates}
For each $i \in \mathcal{I}$,
denote by $\psi_i$ the probability that class $i$ is idle
and by $\psi_{|-i}$ the probability that the system without class $i$ is empty.
As in \S \ref{subsec:conditioning},
one can show that $\psi =  \psi_i \psi_{|-i}$.
Applying  \eqref{eq:emptyderivative} to both $\psi$ and $\psi_{|-i}$, we get
\begin{equation}
\label{eq:psi_vs_rho}
	\begin{split}
\psi_i
&= \frac\psi{\psi_{|-i}}
= \frac{M(\mathcal{K}) - \Lambda(\mathcal{I})}{M(\mathcal{K}) - \Lambda(\mathcal{I} \setminus \{i\})}
 \frac
{ \sum_{k \in \mathcal{K}} \frac{\mu_k}{\psi_{|-k,i}} }
{ \sum_{k \in \mathcal{K}} \frac{\mu_k}{\psi_{|-k}} }
= \left( 1 - \rho_i \right)
\frac
{ \sum_{k \in \mathcal{K}} \frac{\mu_k}{\psi_{|-k,i}} }
{ \sum_{k \in \mathcal{K}} \frac{\mu_k}{\psi_{|-k}} },
	\end{split}
\end{equation}
where $\rho_i = \frac{\lambda_i}{M(\mathcal{K}) - \Lambda(\mathcal{I} \setminus \{i\})}$
corresponds the load associated to class $i$
and $\psi_{|-k,i}$ denotes the conditional probability that the system is empty given that class $i$ and server $k$ are idle,
for each $k \in \mathcal{K}$.
Again, the first factor, $1-\rho_i$,  is the probability that class $i$ is idle under complete resource pooling while the second factor quantifies the overhead due to incomplete resource pooling.

\paragraph{Server occupancies}
In view of  \eqref{eq:subsystem}, the recursion  \eqref{eq:empty} applied to both $\psi$ and $\psi_{|-k}$ gives an effective way of computing $\psi_k$,
the probability that server $k$ is idle, for each $k \in \mathcal{K}$.
From this we can easily compute the mean number of active servers in steady state,
given by $K - \sum_{k \in \mathcal{K}} \psi_k$.

\subsection{Mean number of jobs}
\label{subsec:perf}

We now extend the recursion of Theorem \ref{theo:empty} to get the mean number of jobs of each class in the system, from which we can derive the mean response times.
The notations are the same as above.

\begin{theo}
  \label{theo:mean}
  For each $i \in \mathcal{I}$, the mean number of class-$i$ jobs in the system is given by
  \begin{equation}
    \label{eq:meani_s}
    L_i
    =  \frac{\lambda_i + \sum_{k \in \mathcal{K} \setminus \mathcal{K}_i} \mu_k \psi_k L_{i|-k}} 
    { M(\mathcal{K}) - \Lambda(\mathcal{I}) }\text{,}
  \end{equation}
  and the mean number of jobs in the system is
  \begin{equation}
    \label{eq:mean_s}
    L
    =  \frac{\Lambda(\mathcal{I}) + \sum_{k \in \mathcal{K}} \mu_k \psi_k L_{|-k}} 
    { M(\mathcal{K}) - \Lambda(\mathcal{I}) }\text{.}
  \end{equation}

  Equivalent expressions are
  \begin{equation}
    \label{eq:meani}
    L_i
    = \frac{\rho_i}{1-\rho_i}
    + \frac1{1-\rho}
    \frac{ \sum_{k \in \mathcal{K} \setminus \mathcal{K}_i} \mu_k \psi_k L_{i|-k} }{ M(\mathcal{K}) }
  \end{equation}
  and
  \begin{equation}
    \label{eq:mean}
    L = \frac\rho{1-\rho}
    + \frac1{1-\rho}
    \frac{ \sum_{k \in \mathcal{K}} \mu_k \psi_k L_{|-k} }{ M(\mathcal{K}) },
  \end{equation}
  \noindent where
  $\rho_i = \frac{\lambda_i}{ M(\mathcal{K}) - \Lambda(\mathcal{I} \setminus \{i\}) }$
  is the load associated to class $i$ and $\rho = \frac{ \Lambda(\mathcal{I}) }{ M(\mathcal{K}) }$ is the overall load in the system. 
\end{theo}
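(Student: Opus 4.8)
The plan is to obtain \eqref{eq:meani_s} by differentiating the recursion \eqref{eq:emptyderivative} with respect to $\lambda_i$ and feeding the result into the identity \eqref{eq:defmean}, then to deduce \eqref{eq:mean_s} by summing over classes, and finally to recast both formulas into \eqref{eq:meani} and \eqref{eq:mean} by elementary algebra.

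First I would write $\frac1\psi = \frac1{M(\mathcal{K}) - \Lambda(\mathcal{I})} \sum_{k \in \mathcal{K}} \frac{\mu_k}{\psi_{|-k}}$ from \eqref{eq:emptyderivative} and differentiate with respect to $\lambda_i$. The factor $M(\mathcal{K}) - \Lambda(\mathcal{I})$ in the denominator has derivative $-1$ in $\lambda_i$. The crucial observation is that the reduced system $|-k$ carries the arrival rate $\lambda_{i|-k} = \lambda_i 1_{\{i \notin \mathcal{I}_k\}}$, so $\psi_{|-k}$ depends on $\lambda_i$ exactly when $k \notin \mathcal{K}_i$; the terms with $k \in \mathcal{K}_i$ vanish under differentiation, which is precisely what restricts the sum in \eqref{eq:meani_s} to $\mathcal{K} \setminus \mathcal{K}_i$. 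For each surviving index $k$, applying \eqref{eq:defmean} to the subsystem $|-k$ gives $\frac{\partial (1/\psi_{|-k})}{\partial \lambda_i} = \frac{L_{i|-k}}{\lambda_i \psi_{|-k}}$, which converts the derivatives into the mean numbers $L_{i|-k}$ of the subsystems.

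Collecting these pieces and multiplying by $\psi \lambda_i$ as prescribed by \eqref{eq:defmean}, I would simplify using $\psi = D/N$ with $D = M(\mathcal{K}) - \Lambda(\mathcal{I})$ and $N = \sum_{k \in \mathcal{K}} \mu_k/\psi_{|-k}$, together with \eqref{eq:subsystem} rewritten as $1/\psi_{|-k} = \psi_k/\psi$. The key cancellation is $\psi N = D$, which turns each weight $\mu_k/\psi_{|-k}$ in the numerator into $\mu_k \psi_k / D$ and yields \eqref{eq:meani_s} directly. Summing \eqref{eq:meani_s} over $i \in \mathcal{I}$ then produces \eqref{eq:mean_s}: the first terms add up to $\Lambda(\mathcal{I})$, and after swapping the order of summation the inner sum $\sum_{i \notin \mathcal{I}_k} L_{i|-k}$ is exactly the total mean number of jobs $L_{|-k}$ in the subsystem $|-k$, since that subsystem contains precisely the classes outside $\mathcal{I}_k$.

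Finally, the equivalent forms \eqref{eq:meani} and \eqref{eq:mean} follow by rewriting the prefactor $1/D$. Using $1 - \rho = D/M(\mathcal{K})$ puts the second terms into the stated shape, while the identities $\frac{\rho_i}{1-\rho_i} = \frac{\lambda_i}{D}$ and $\frac{\rho}{1-\rho} = \frac{\Lambda(\mathcal{I})}{D}$ (both immediate from $\Lambda(\mathcal{I}\setminus\{i\}) = \Lambda(\mathcal{I}) - \lambda_i$) recast the first terms. I expect the only delicate point to be the bookkeeping in the differentiation step: correctly identifying which subsystems depend on $\lambda_i$ and applying \eqref{eq:defmean} at the level of the reduced system. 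Once that is handled, the remaining manipulations are routine.
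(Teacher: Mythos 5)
Your proposal is correct and takes essentially the same route as the paper's own proof: differentiate \eqref{eq:emptyderivative} in $\lambda_i$ (noting that $\psi_{|-k}$ depends on $\lambda_i$ only for $k \notin \mathcal{K}_i$), apply \eqref{eq:defmean} both to the full system and to each reduced system $|-k$, use \eqref{eq:subsystem} in the form $\psi/\psi_{|-k} = \psi_k$ to produce the weights $\mu_k \psi_k$, then sum over classes (where $\sum_{i \notin \mathcal{I}_k} L_{i|-k} = L_{|-k}$, matching the paper's use of $L_{i|-k} = 0$ for $i \in \mathcal{I}_k$) and finish with the same algebraic rewriting via $1-\rho = \left( M(\mathcal{K}) - \Lambda(\mathcal{I}) \right)/M(\mathcal{K})$. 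All steps are sound and the bookkeeping in the differentiation step is handled exactly as in the paper.
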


\begin{proof}
  Let $i \in \mathcal{I}$. In view of \eqref{eq:emptyderivative}, we have
  \begin{align*}
    \frac{\partial}{\partial \lambda_i} \left( \frac1{\psi} \right)
    ={} &\frac1{ \left( M(\mathcal{K}) - \Lambda(\mathcal{I}) \right)^2 }
    \sum_{k \in \mathcal{K}} \frac{\mu_k}{\psi_{|-k}}
    + \frac1{ M(\mathcal{K}) - \Lambda(\mathcal{I}) }
    \sum_{k \in \mathcal{K} \setminus \mathcal{K}_i} \mu_k
    \frac\partial{\partial \lambda_i} \left( \frac1{\psi_{|-k}} \right).
  \end{align*}
  We recognize the expression of the inverse of $\psi$ given by \eqref{eq:emptyderivative} in the first term.
  Injecting this in \eqref{eq:defmean} yields
  \begin{align*}
    L_i
    = \frac{\lambda_i}{ M(\mathcal{K}) - \Lambda(\mathcal{I}) }
    + \frac1{ M(\mathcal{K}) - \Lambda(\mathcal{I}) }
    \sum_{k \in \mathcal{K} \setminus \mathcal{K}_i} \mu_k \lambda_i \psi
    \frac\partial{\partial \lambda_i} \left( \frac1{\psi_{|-k}} \right).
  \end{align*}
  Additionally, for each $k \in \mathcal{K} \setminus \mathcal{K}_i$, we have by \eqref{eq:defmean} and \eqref{eq:subsystem},
  \begin{align*}
    \lambda_i \psi \frac\partial{\partial \lambda_i} \left( \frac1{\psi_{|-k}} \right)
    &= \frac\psi{\psi_{|-k}} \times \lambda_i \psi_{|-k}
    \frac\partial{\partial \lambda_i} \left( \frac1{\psi_{|-k}} \right)
    = \psi_k L_{i|-k}\text{.}
  \end{align*}
  Hence we obtain \eqref{eq:meani_s}. 
  \eqref{eq:mean_s} follows by observing that
  \begin{align*}
    \sum_{i\in {\mathcal I}} \sum_{k \in \mathcal{K} \setminus \mathcal{K}_i} \mu_k \psi_k L_{i|-k}
      =
    \sum_{i\in {\mathcal I}} \sum_{k \in \mathcal{K}} \mu_k \psi_k L_{i|-k}
      =
    \sum_{k \in \mathcal{K}} \mu_k \psi_k  \sum_{i\in {\mathcal I}}  L_{i|-k}
     = 
    \sum_{k \in \mathcal{K}} \mu_k \psi_k   L_{|-k}.
  \end{align*}

  Finally,  \eqref{eq:meani} is a simple rewriting of  \eqref{eq:meani_s}, using
  $$
  \frac{\lambda_i}{ M(\mathcal{K}) - \Lambda(\mathcal{I})} = \frac{\lambda_i}{ M(\mathcal{K}) - \Lambda(\mathcal{I} \setminus \{i\}) - \lambda_i }
  = \frac{\rho_i}{1-\rho_i}
  $$
  \noindent and
  \begin{align*}
    \frac{1}{ M(\mathcal{K}) - \Lambda(\mathcal{I})} = 
    \frac{M(\mathcal{K})}{ M(\mathcal{K}) - \Lambda(\mathcal{I})} . \frac{1}{ M(\mathcal{K}) } =
    \frac{1}{1-\rho}.\frac{1}{ M(\mathcal{K}) }
    \text{.}
  \end{align*}
  \eqref{eq:mean} follows by summation, observing that
  $$
  \sum_{i \in \mathcal{I}} \frac{\rho_i}{1-\rho_i}
  =  \sum_{i \in \mathcal{I}} \frac{\lambda_i}{M(\mathcal{K}) - \Lambda(\mathcal{I})}
  = \frac\rho{1-\rho}.
  $$
\end{proof}

As in Theorem \ref{theo:empty}, the complexity of each of these recursive formulas is $O\left( (I+K) 2^K \right)$ in the worst case but, again, polynomial in a number of practically interesting cases. 
Moreover, expressions \eqref{eq:meani} and  \eqref{eq:mean} reveal the impact of incomplete resource pooling on performance, as the first terms of each expression, $\rho_i / (1-\rho_i)$
and  $\rho / (1-\rho)$, are the number of class-$i$ jobs and the total number of jobs under complete resource pooling.

\subsection{Toy example}\label{sec:toy-example}

Before presenting practically interesting applications of our recursive formula  in the following two sections, we illustrate it on the M model pictured in Figure \ref{fig:toy}.
Its analysis, already performed in \cite{G16}, is now simplified by a direct application of Theorems \ref{theo:empty} and \ref{theo:mean}.
Let $\lambda = \lambda_1+\lambda_2$ and $\mu=\mu_1+\mu_2+\mu_3$ be the total arrival and service rates in the system. 
The system load is $\rho = \lambda / \mu$.
Using the fact that
$$
\psi_{|-1} = 1 - \frac{\lambda_2}{\mu_2+\mu_3},\quad \psi_{|-2} = 1 - \frac{\lambda_1}{\mu_1+\mu_3},\quad \psi_{|-3} = 1, 
$$
we get from \eqref{eq:empty}:
$$
\psi = (1-\rho) \frac{\mu}{\mu'},
$$
where
$$
\mu' =
\mu_1 \frac{\mu_2 + \mu_3}{\mu_2 + \mu_3 - \lambda_2}
+ \mu_2 \frac{\mu_1 + \mu_3}{\mu_1 + \mu_3 - \lambda_1}
+ \mu_3.
$$

For the mean number of jobs in the system, we have 
$$
L_{|-1} = \frac{\lambda_2}{\mu_2+\mu_3-\lambda_2},\quad L_{|-2} = \frac{\lambda_1}{\mu_1+\mu_3-\lambda_1},\quad L_{|-3} = 0, 
$$
so that
$$
\psi_1  L_{|-1}= \psi \frac{\lambda_2(\mu_2+\mu_3)}{(\mu_{2}+\mu_3 -\lambda_2)^2}, \quad \psi_2  L_{|-2}= \psi \frac{\lambda_1(\mu_1+\mu_3)}{(\mu_{1}+\mu_3 -\lambda_1)^2},
$$
and, in view of \eqref{eq:mean},
$$
L = \frac{\rho}{1-\rho}+ \frac{\psi}{\mu(1-\rho)}\left(\frac{\lambda_2\mu_1(\mu_2+\mu_3)}{(\mu_{2}+\mu_3 -\lambda_2)^2}+\frac{\lambda_1\mu_2(\mu_1+\mu_3)}{(\mu_{1}+\mu_3 -\lambda_1)^2}\right).
$$

\section{Randomized Assignment}\label{sec:randomizeds}

We first apply our results to randomized load balancing schemes,
where each incoming job is assigned to a set of servers chosen at random,
independently of their current occupancy.
This oblivious load balancing
may cause a loss of performance compared to more sophisticated schemes,
but it has the advantage of involving no central authority to dispatch jobs.
As we will see, it is possible to leverage the symmetries of the system to compute performance metrics
with a complexity which is polynomial in the number of servers
while allowing for some heterogeneity. The  complexity of the recursive formulas presented in  this section and the following one, for randomized and local assignment schemes, respectively, are summarized in Table \ref{tab:complexity}.

\begin{table}[h]
	\begin{minipage}{\columnwidth}
		\begin{center}
			\begin{tabular}{|>{\centering}m{.14\textwidth}|>{\centering}m{.16\textwidth}|>{\centering}m{.2\textwidth}|>{\centering}m{.1\textwidth}|>{\centering}m{.1\textwidth}|>{\centering\arraybackslash}m{.1\textwidth}|}
				\hline 
				\multicolumn{3}{|c|}{Randomized assignment}
				& \multicolumn{3}{c|}{Local assignment} \\ 
				\hline 
				Hom. & Het. degrees & Het. pool & Nested & Hom. & Het. \\
				\hline
				$O(K)$ & $O(NK)$ & $O\left( N S K_1 \cdots K_S \right)$ 
				& $O(KI)$ 
				& $O(K^2)$ & $O(K^3)$ \\
				\hline
			\end{tabular}
			\caption{\label{tab:complexity} Complexities of the recursive formulas for different pool structures. \emph{Hom.} and \emph{Het.} stand for \emph{homogeneous} and \emph{heterogeneous}, respectively. 
			$ K $ is the number of servers, $ N $ the number of job types, $ S $ the number of server groups, $ K_s $ the number of servers in group $ s $, and $ I $ the number of job classes. Each entry gives the complexity to compute the global metric $ L $. In all structures but the homogeneous line, it is also the complexity to compute the metric $ L_i $ for a specific class $ i $. }
		\end{center}
	\end{minipage}
\end{table}

\subsection{Homogeneous pool}\label{sec:homogeneous}

We consider a pool of $K$ servers, each with service rate $\mu$.
Jobs arrive at rate $K \lambda$.
Upon arrival, each job is assigned to $d \le K$ servers chosen uniformly at random,
independently of the current state of the system, so that all jobs have the same degree of parallelism $ d $.
Since all servers are exchangeable, the load
$\rho = \lambda / \mu$
of the system is also the load of each server.
This model was considered in \cite{G17-1},
where is was shown that the system is stable if and only if $\rho < 1$.

\begin{figure}[ht]
  \centering
  \subfloat[Compact representation]{
    \begin{tikzpicture}[scale = 1.3]
      \node[server] (server1) at (-.8,0) {$\mu$};
      \node[server] (server2) {$\mu$};
      \node[server] (server3) at (.8,0) {$\mu$};

      \draw[type] (0,0) ellipse (1.4 and .6);
      \node at (1.1,.7) {$K=3$};

      \node[type] (job) at (0,2) {$K \lambda$};

      \draw[-] (job) edge node[midway,left] {$d=2$} (0,.6);
    \end{tikzpicture}
  }
  \qquad \qquad
  \subfloat[Expanded representation]{
    \begin{tikzpicture}[scale = 1.3]
      \node[server] (server1) at (-1,0) {$\mu$};
      \node[server] (server2) {$\mu$};
      \node[server] (server3) at (1,0) {$\mu$};

      \node[type] (job) at (0,2) {$K \lambda$};

      \node[class] (job1) at (-1,1) {$\frac{K \lambda}{\binom{K}d}$};
      \node[class] (job2) at (0,1) {$\frac{K \lambda}{\binom{K}d}$};
      \node[class] (job3) at (1,1) {$\frac{K \lambda}{\binom{K}d}$};

      \draw[densely dashed] (job) edge (job1);
      \draw[densely dashed] (job) edge (job2);
      \draw[densely dashed] (job) edge (job3);

      \draw[-] (job1) edge (server1);
      \draw[-] (job1) edge (server2);
      \draw[-] (job2) edge (server1);
      \draw[-] (job2) edge (server3);
      \draw[-] (job3) edge (server2);
      \draw[-] (job3) edge (server3);

      \draw[type,draw=none] (0,0) ellipse (1.4 and .6);
    \end{tikzpicture}
  }
  \caption{A homogeneous pool}
  \label{fig:random-homogeneous}
\end{figure}
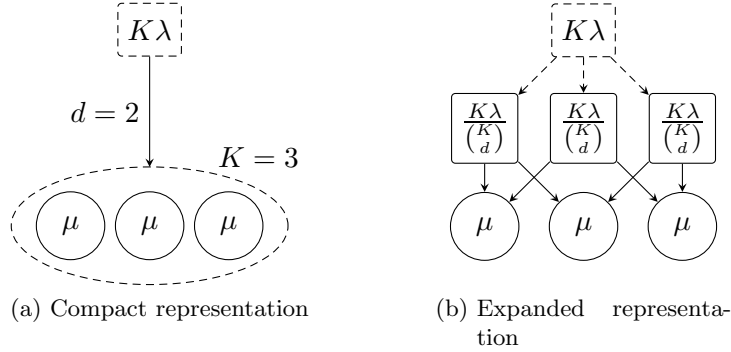

We will now apply Theorems \ref{theo:empty} and \ref{theo:mean}
to give a simple proof of the following results given in \cite[Theorems 1 and 2]{G17-1}:
\begin{equation}
  \label{eq:random-homogeneous}
  \psi = \prod_{\ell=d}^K \left( 1 - \rho_{|\ell} \right)
  \quad \text{and} \quad
  L = \sum_{\ell=d}^K \frac{ \rho_{|\ell} }{ 1 - \rho_{|\ell} },
\end{equation}
where
$$
\rho_{|\ell}
= \frac1{\ell \mu} \frac{ \binom{\ell}d }{ \binom{K}d } K \lambda
= \rho \frac{ \binom{\ell-1}{d-1} }{ \binom{K-1}{d-1} }
$$
denotes the load in the system restricted to $\ell$ arbitrary servers,
that is, the aggregate load generated by the job classes that can only be served by these $\ell$ servers.
These formulas can be evaluated with a complexity $O(K)$ if we compute the binomial coefficients by recursion as follows:
$$
\binom{\ell-1}{d-1}
= \left( 1 + \frac{d-1}{\ell-d} \right) \binom{\ell-2}{d-1},
\quad \forall \ell = d+1,\ldots,K,
$$
with the base case $\binom{d-1}{d-1} = 1$ when $\ell = d$.

In our framework, the class of a job defines the set of servers to which it was assigned upon arrival.
There are $I = \binom{K}d$ job classes,
one for each possible assignment of $d$ servers among $K$.
This is illustrated in Figure \ref{fig:random-homogeneous} with $K = 3$ servers
and a degree $d = 2$.
Since the assignment is uniform,
all classes have the same arrival rate $K \lambda / \binom{K}d$.
Thanks to the exchangeability of the servers,
we only need to keep track of the number of active servers
and not of their exact index when conditioning on the activity of the servers.

Specifically, for each $\ell = d,\ldots,K$, let $\psi_{|\ell}$ denote the probability that a system restricted to jobs processed by  $\ell$ arbitrary  servers is empty.
In this system,
the arrival rate is that of the jobs which are assigned to $d$ of these $\ell$ servers, namely
 $K \lambda \binom\ell{d} / \binom{K}d$.
The total service rate is $\ell \mu$.
Applying \eqref{eq:empty} then yields
$$
\psi_{|\ell}
= \left( 1 - \rho_{|\ell} \right)
\frac{ \ell \mu }{ \ell \frac\mu{\psi_{|\ell-1}} }
= \left( 1 - \rho_{|\ell} \right) \psi_{|\ell-1}.
$$
When there are $\ell = 1,\ldots,d-1$ servers left,
there are no more arrivals and the system is empty with probability $1$, i.e., $\psi_{|\ell} =1$.
The result announced for $\psi = \psi_{|K}$ follows by expanding the recursion.

Similarly, let $L_{|\ell}$ denote the mean number of jobs
in the system restricted to $\ell$  arbitrary servers, for each $\ell = 1,\ldots,K$.
\eqref{eq:mean} yields
\begin{align*}
  L_{|\ell}
  = \frac{\rho_{|\ell}}{1-\rho_{|\ell}}
  + \frac1{1 - \rho_{|\ell}} \frac
  { \ell \mu \frac{\psi_{|\ell}}{\psi_{|\ell-1}} L_{|\ell-1} }
  { \ell \mu }
  = \frac1{1 - \rho_{|\ell}} + L_{|\ell-1},
\end{align*}
for each $\ell = d,\ldots,K$, 
with the base cases $L_{|\ell} = 0$ for all $\ell = 1,\ldots,d-1$, from which \eqref{eq:random-homogeneous} follows.

Although the proof for $\psi$ is very close to that of \cite{G17-1},
the proof for $L$ is greatly simplified by Theorem \ref{theo:mean}.
We now see how to generalize the results to other classes of pools.

\subsection{Heterogeneous degrees}
\label{subsec:random-differentiated}

Consider a first extension where jobs can have different parallelism degrees.
There are still $K$ servers in the pool, each with service rate $\mu$,
but jobs are now divided into $N$ \emph{types}.
For each $u = 1,\ldots,N$,
type-$u$ jobs arrive
at rate $K \lambda p_u$,
with  $p_1 + \ldots + p_N = 1$, so that the total arrival rate is still $K\lambda$.
Upon arrival, a job of type $u$ is assigned to $d_u$ servers chosen uniformly at random,
independently of the current state of the system.
The load $\rho = \lambda / \mu$ of the system is also that of each server.
An  example with $N = 2$ job types  is given in Figure \ref{fig:random-differentiated}.

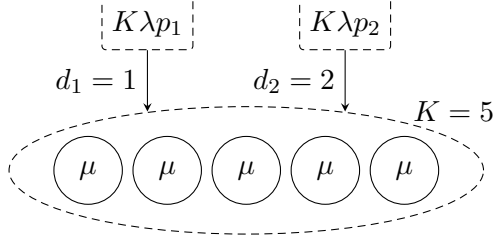
\begin{figure}[ht]
  \centering
  \begin{tikzpicture}[scale = 1.3]
    \node[server] (server1) at (-1.6,0) {$\mu$};
    \node[server] (server2) at (-.8,0) {$\mu$};
    \node[server] (server3) {$\mu$};
    \node[server] (server4) at (.8,0) {$\mu$};
    \node[server] (server5) at (1.6,0) {$\mu$};

    \draw[type] (0,0) ellipse (2.4 and .65);
    \node at (2.1,.6) {$K=5$};

    \node[type] (job1) at (-1,1.5) {$K \lambda p_1$};
    \node[type] (job2) at (1,1.5) {$K \lambda p_2$};

    \draw[-] (job1) edge node[midway,left] {$d_1=1$} (-1,.6);
    \draw[-] (job2) edge node[midway,left] {$d_2=2$} (1,.6);
  \end{tikzpicture}
  \caption{A homogeneous pool with two degree types.}
  \label{fig:random-differentiated}
\end{figure}

Using Theorems \ref{theo:empty} and \ref{theo:mean},
we can easily extend the results of the previous section.
For each $u = 1,\ldots,N$,
there are $\binom{K}{d_u}$ classes associated to type $u$,
one for each possible assignment of a type-$u$ job to $d_u$ servers among $K$.
All type-$u$ classes have the same arrival rate $K \lambda p_u / \binom{K}{d_u}$.
The exchangeability of the servers still ensures that we simply need to keep track of the number of active servers
when conditioning on their activity.

For each $\ell = 1,\ldots,K$,
let $\psi_{|\ell}$ denote the probability that the system restricted to $\ell$ arbitrary servers is empty.
For each $u = 1,\ldots,N$ with $d_u \ge \ell$,
there are $\binom\ell{d_u}$ type-$u$ classes
which are assigned to these $\ell$ servers,
so that the remaining arrival rate of type-$u$ jobs is
$K \lambda p_u \binom\ell{d_u} / \binom{K}{d_u}$.
For each $u = 1,\ldots,N$ with $d_u < \ell$,
there are no classes associated to type $u$ which are assigned to these $\ell$ servers only,
so that the arrival rate for type $u$ is zero.
In this case, we adopt the convention that $\binom\ell{d_u} = 0$, so that we can still write
$K \lambda p_u \binom\ell{d_u} / \binom{K}{d_u}$ for the arrival rate.
The total load in the system restricted to $\ell$ servers is then given by
\begin{equation*}
  \rho_{|\ell} = \frac1{ \ell \mu }
  \sum_{u=1}^N \frac{ \binom{\ell}{d_u} }{ \binom{K}{d_u} } K \lambda p_u
  = \rho \sum_{u=1}^N \frac{ \binom{\ell-1}{d_u-1} }{ \binom{K-1}{d_u-1} } p_u.
\end{equation*}
Observe that $\rho_{|\ell} < 1$ whenever $\rho < 1$
because $\binom{\ell}{d_u} / \binom{K}{d_u} \le \ell / K$ for each $u = 1,\ldots,N$.
Hence the system is stable whenever $\rho < 1$.
Using the exchangeability of the servers,
we can apply the same simplifications in \eqref{eq:empty} and \eqref{eq:mean} as in the homogeneous case,
so that $\psi$ and $L$ are still given by \eqref{eq:random-homogeneous}
where $\rho_{|\ell}$ is given by the expression above.
These formulas can be evaluated with a complexity $O(NK)$.
If a high number $R$ of values of the load $\rho$ is to be considered,
it is possible to precompute $\rho_{|\ell} / \rho$ for each
$\ell = 1,\ldots,K$
with complexity $O(NK)$
and then compute the results for each value of $\rho$ with complexity $O(RK)$,
so that the overall complexity is $O((N+R)K)$ instead of $O(RNK)$.

Since the jobs are differentiated by their degree, it is also interesting to evaluate the performance for each type of jobs individually.
It can be derived by applying \eqref{eq:meani} to  each class and then summing over all classes of the same type.
We obtain that the mean number of jobs of type $u$ in the system is given by
$$
L_u = \sum_{\ell=d_u}^K \frac{\rho_{u|\ell}}{1 - \rho_{u|\ell}},
$$
where, for each $\ell = d_u,\ldots,K$, $\rho_{u|\ell}$ is the load associated to type-$u$ jobs in the system with $\ell$ servers left:
$$
\rho_{u|\ell}
= \frac
{ \frac{ \binom{\ell}{d_u} }{ \binom{K}{d_u} } K \lambda p_u }
{ \ell \mu - \sum\limits_{\substack{v=1 \\ v \neq u}}^N \frac{ \binom{\ell}{d_v} }{ \binom{K}{d_v} } K \lambda p_v }
= \frac
{ \rho \frac{ \binom{\ell-1}{d_u-1} }{ \binom{K-1}{d_u-1} } p_u }
{ 1 - \rho \sum\limits_{\substack{v=1 \\ v \neq u}}^N \frac{ \binom{\ell-1}{d_v-1} }{ \binom{K-1}{d_v-1} } p_v } \text{.}
$$
The mean number of jobs of a given type can be evaluated with a complexity $O(NK)$.
It is again possible to make some precomputations when several values of the load $\rho$ are to be considered.

These results will be used in \S \ref{subsec:gain-of-differentiation}.

\subsection{Heterogeneous servers}
\label{subsec:random-heterogeneous}

We can further extend the model by considering server heterogeneity.
We distinguish $S$ groups of servers.
For each $s = 1,\ldots,S$, there are $K_s$ servers in group $s$, each with capacity $\mu_s$.
Like in \S \ref{subsec:random-differentiated}, we also distinguish $N$ types of jobs.
For each $u = 1,\ldots,N$, type-$u$ jobs arrive in the system at rate $K\lambda p_u$,
with $p_1 + \ldots + p_N = 1$. 
Upon arrival, each job of type $u$ is assigned to $d_{u,s}$ servers chosen uniformly at random among the $K_s$ servers of group $s$,
independently of the current state of the system, for each $s = 1,\ldots,S$.
The load of the system is now given by $\rho = K \lambda / \sum_{s=1}^S K_s \mu_s$.
Such a configuration is illustrated in Figure \ref{fig:random-heterogeneous}, with $N = 2$ types of jobs and $S = 2$ groups of servers.

\begin{figure}[ht]
  \centering
  \begin{tikzpicture}[scale = 1.3]
    \def\shift{.5}

    \node[server] (server11) at (-\shift-2.35,0) {$\mu_1$};
    \node at (-\shift-1.55,0) {$\ldots$};
    \node[server] (server12) at (-\shift-.75,0) {$\mu_1$};
    \node[server] (server21) at (\shift+.75,0) {$\mu_2$};
    \node at (\shift+1.55,0) {$\ldots$};
    \node[server] (server22) at (\shift+2.35,0) {$\mu_2$};

    \draw[type] (-\shift-1.55,0) ellipse (1.4 and .6);
    \node at (-\shift-2.75,.55) {$K_1$};
    \draw[type] (\shift+1.55,0) ellipse (1.4 and .6);
    \node at (\shift+2.75,.55) {$K_2$};

    \node[type] (job1) at (0,1.3) {Type-$1$ jobs};
    \node[type] (job2) at (0,-1.3) {Type-$2$ jobs};

    \draw (job1) edge node[midway,left,yshift=3] {$d_{1,1}$} (-\shift-1,.55);
    \draw (job1) edge node[midway,right,yshift=3] {$d_{1,2}$} (\shift+1,.55);
    \draw (job2) edge node[midway,left,yshift=-3] {$d_{2,1}$} (-\shift-1,-.55);
    \draw (job2) edge node[midway,right,yshift=-3] {$d_{2,2}$} (\shift+1,-.55);
  \end{tikzpicture}
  \caption{A heterogeneous pool with two degree types.}
  \label{fig:random-heterogeneous}
\end{figure}
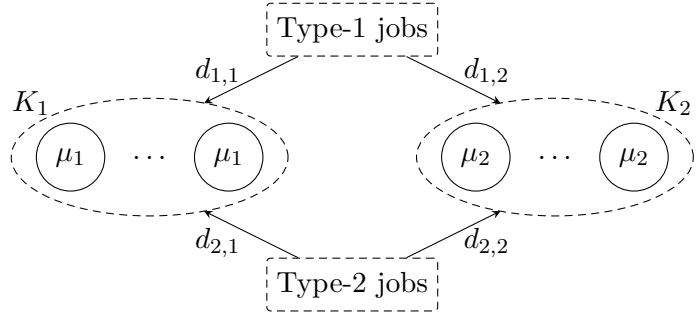

We now apply our framework to this heterogeneous pool.
For each $u = 1,\ldots,N$,
a class associated to type $u$ is defined by choosing independently $d_{u,s}$ servers within group $s$, for each $s = 1,\ldots,S$.
Thus there are $\prod_{s=1}^S \binom{K_s}{d_{u,s}}$ classes associated to type-$u$ jobs,
each with the same arrival rate.
Since the servers from different groups are not exchangeable,
we need to keep track of the number of servers \emph{within each group} when conditioning on their activity.

For each $\ell = (\ell_s: s = 1,\ldots,S)$, with $\ell_s \le K_s$ for each $s = 1,\ldots,S$,
we let $\psi_{|\ell}$ denote the probability that the system restricted to $\ell_s$ arbitrary servers of group $s$ for each $s = 1,\ldots,S$ is empty.
We also let $L_{|\ell}$ denote the mean number of jobs in this system.
By an argument similar to those of the previous sections, we obtain that the load in this restricted system is
$$
\rho_{|\ell} = \rho
\left( \sum\limits_{u=1}^N p_u \prod\limits_{s=1}^S \frac{ \binom{\ell_s}{d_{u,s}} }{ \binom{K_s}{d_{u,s}} } \right)
\frac{ \sum_{s=1}^S K_s \mu_s }{ \sum_{s=1}^S \ell_s \mu_s }.
$$
Accounting for the server exchangeability within each group in \eqref{eq:empty} and \eqref{eq:mean} yields directly
\begin{align*}
  \psi_{|\ell}
  &= \left( 1 - \rho_{|\ell} \right) \frac
  { \sum_{s=1}^S \ell_s \mu_s }
  { \sum_{s=1}^S \ell_s \frac{\mu_s}{\psi_{|\ell-e_s}} }
  \quad \text{and} \quad
  L_{|\ell}
  = \frac{\rho_{|\ell}}{1 - \rho_{|\ell}}
  + \frac1{1 - \rho_{|\ell}} \frac
  { \sum_{s=1}^S \ell_s \mu_s \frac{\psi_{|\ell}}{\psi_{|\ell-e_s}} L_{|\ell-e_s} }
  { \sum_{s=1}^S \ell_s \mu_s }.
\end{align*}
Hence we can compute $\psi$ and $L$ by recursion,
with complexity $O(N S K_1 \cdots K_S)$,
which is $ O(NS(\frac{K}{S})^S) $ in the worst case.
While still polynomial in $ K $, the complexity suggests to limit the study to small values of $ S $.
If a high number $R$ of values of the load $\rho$ is to be considered,
it is possible to precompute $\rho_{|\ell} / \rho$ for each $\ell$
with complexity $O(N S K_1 \cdots K_S)$
and then compute the results for each value of $\rho$ with complexity $O(R S K_1 \cdots K_S)$,
so that the overall complexity is $O((N+R) S K_1 \cdots K_S)$ instead of $O(R N S K_1 \cdots K_S)$.

Similarly, applying \eqref{eq:meani} per class and then summing over all classes associated to the same type
give the following recursion for the mean number of jobs of type $u$, for each $u = 1,\ldots,N$:
$$
L_{u|\ell} =
\frac{ \rho_{u|\ell} }{ 1 - \rho_{u|\ell} }
+ \frac1{ 1 - \rho_{|\ell} } \frac
{ \sum_{s=1}^S \ell_s \mu_s \frac{\psi_{|\ell}}{\psi_{|\ell-1}} L_{u|\ell-e_s} }
{ \sum_{s=1}^S \ell_s \mu_s },
$$
where $\rho_{u|\ell}$ is the load associated to type-$u$ jobs in the system with $\ell_s$ servers left in group $s$, for each $s = 1,\ldots,S$, given by
$$
\rho_{u|\ell} = \frac
{ \rho
  \left( \prod\limits_{s=1}^S \frac{ \binom{\ell_s}{d_{u,s}} }{ \binom{K_s}{d_{u,s}} } \right) p_u
  \frac{ \sum_{s=1}^S K_s \mu_s }{ \sum_{s=1}^S \ell_s \mu_s }
}
{ 1 - \rho
  \left( \sum\limits_{\substack{v=1 \\ v \neq u}}^N \prod\limits_{s=1}^S \frac{ \binom{\ell_s}{d_{v,s}} }{ \binom{K_s}{d_{v,s}} } p_v \right)
  \frac{ \sum_{s=1}^S K_s \mu_s }{ \sum_{s=1}^S \ell_s \mu_s }
} \text{.}
$$
It is again possible to make some precomputations
when several values of the load $\rho$ are to be considered.

\section{Local Assignment}\label{sec:ranged-clusters}

In the previous section, we have assumed that a job could be assigned to an arbitrary subset of servers. This large degree of freedom can be difficult to implement in practice. For example, one may want to select physically-close servers in order to minimize the communication overhead. This is what we call \emph{local} assignment.

In this section, we abstract the concept of localization by introducing \emph{line pools}, where servers are assumed to be located along a line and indexed by the integers $1,\ldots,K $ so that servers $i$ and $j$ are at physical distance $|i-j|$. 
The locality constraint is modeled by the assignment graph: each job class is assigned to an integer interval. For ease of notation, we identify a class and its assigned range:  $ i,j $ denotes the class that is assigned servers $i$ to $j$. An illustration of a line pool is given in Figure \ref{fig:range}.

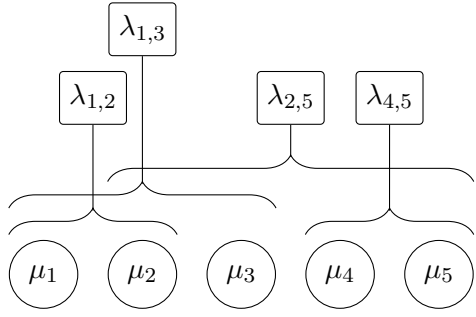
\begin{figure}[h]
  \centering
  \begin{tikzpicture}[scale = 1.3, draw, node distance=2cm,auto,>=stealth]
    \node[class] (l13) at (-1,2.5) {$\lambda_{1,3}$};
    \node[class] (l25) at (.5,1.8) {$\lambda_{2,5}$};
    \node[class] (l12) at (-1.5,1.8) {$\lambda_{1,2}$};
    \node[class] (l45) at (1.5,1.8) {$\lambda_{4,5}$};

    \node[server] (u1) at (-2,0) {$ \mu_1 $};
    \node[server] (u2) at (-1,0) {$ \mu_2 $};
    \node[server] (u3) at (0,0) {$ \mu_3 $};
    \node[server] (u4) at (1,0) {$ \mu_4 $};
    \node[server] (u5) at (2,0) {$ \mu_5 $};

    \draw [decorate,decoration={brace,raise=15pt, amplitude = 10pt},yshift=0pt]
    (u1.west) -- (u2.east);
    \draw [decorate,decoration={brace,raise=15pt, amplitude = 10pt},yshift=0pt]
    (u4.west) -- (u5.east);
    \draw [decorate,decoration={brace,raise=25pt, amplitude = 10pt},yshift=0pt]
    (u1.west) -- (u3.east);
    \draw [decorate,decoration={brace,raise=35pt, amplitude = 10pt},yshift=0pt]
    (u2.west) -- (u5.east);

    \draw (l12) -- (-1.5,.65)
    (l45) -- (1.5,.65)
    (l13) -- (-1,.925)
    (l25) -- (.5,1.2);

  \end{tikzpicture}
  \caption{A line pool}
  \label{fig:range}
\end{figure}

The rest of the section is organized as follows. We first introduce nested pools, a special type of line pools introduced in \cite{G17-2}. We then study arbitrary line pools, followed by a local version of the randomized schemes studied above. We conclude the section by giving the behavior for a \emph{ring} structure.

\subsection{Nested structures}
\label{sebsec:nested}

A pool is said to be nested if the following property is verified: 
$$
\forall i,j\in {\mathcal I},\quad {\mathcal K}_i \cap {\mathcal K}_j \ne \emptyset \quad \Longrightarrow \quad {\mathcal K}_i \subset {\mathcal K}_j \quad \text{or} \quad {\mathcal K}_j \subset {\mathcal K}_i.
$$
Thus, if two jobs share a server, then the servers assigned to one of these jobs form a subset of the servers assigned to the other job. Without loss of generality, we can always assume that class $ 1,K $, which is assigned to all servers, exists. Otherwise, as observed in \cite{G17-2}, we can split the pool into smaller, independent, nested pools, and consider each sub-pool separately. 
While a line pool is not necessarily a nested pool (consider for example classes $ 1,3 $ and $ 2,5 $ in Figure \ref{fig:range}), the converse holds:

\begin{prop}
	\label{prop:neste_is_range}
	A nested pool is a line pool.
\end{prop}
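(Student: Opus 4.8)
The plan is to show that any nested pool admits an ordering of its $K$ servers along a line such that every class is assigned a contiguous integer interval, which is precisely the definition of a line pool. First I would use the assumption (stated as harmless in the excerpt) that the class $1,K$ assigned to all of $\mathcal{K}$ exists, so the family $\{\mathcal{K}_i : i \in \mathcal{I}\}$ contains $\mathcal{K}$ itself as its largest element. The nested property says any two assignment sets are either disjoint or comparable by inclusion; this is exactly the defining property of a \emph{laminar family} (equivalently, a hierarchy) of subsets of $\mathcal{K}$. The whole problem then reduces to the classical fact that a laminar family can be realized by a linear order in which every member of the family is an interval.

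The key steps, in order, would be as follows. I would organize the sets $\mathcal{K}_i$ into a forest (in fact a tree, since $\mathcal{K}$ is the root) by declaring $\mathcal{K}_i$ a child of $\mathcal{K}_j$ when $\mathcal{K}_i \subsetneq \mathcal{K}_j$ with no intermediate set strictly between them; the laminar property guarantees this parent relation is well defined, since two sets that both contain $\mathcal{K}_i$ must themselves be comparable. I would then argue by induction on the tree (or on $|\mathcal{K}|$) that one can linearly order the servers so that every node of the tree occupies a contiguous block: given a node with children $\mathcal{K}_{j_1},\ldots,\mathcal{K}_{j_m}$, recursively order each child block, concatenate these blocks in an arbitrary fixed order, and finally append any servers lying in the node but in none of its children. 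This concatenation keeps each child an interval and makes the parent the union of consecutive intervals, hence itself an interval. Applying this from the root $\mathcal{K}$ downward yields a global indexing $1,\ldots,K$ of the servers in which every $\mathcal{K}_i$ is an integer interval $[a_i,b_i]$, so writing class $i$ as $a_i,b_i$ exhibits the pool as a line pool.

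The main obstacle, and the step I would be most careful about, is verifying that the recursive concatenation genuinely produces intervals at \emph{every} level simultaneously, rather than merely at the leaves. The delicate point is that servers of a node not covered by any child must be placed so as not to break the contiguity of the children already laid out; appending them after (or before) all child blocks resolves this, but one must check that no set further down or elsewhere in the family straddles the resulting boundary — which is precisely where the laminar (disjoint-or-nested) hypothesis is used again, since any other class either sits entirely inside one block or entirely outside the current node. A secondary, more cosmetic concern is the degenerate case where some servers belong to $\mathcal{K}$ but to no proper subset in the family, or where the family is not connected; the first is absorbed by appending leftover servers at the root, and the second is handled by the reduction to the case $1,K \in \mathcal{I}$ noted in the excerpt, so neither affects the core argument.
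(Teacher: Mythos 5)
Your proposal is correct and is essentially the paper's own argument: the paper builds the same tree (parent of a class is the smallest class containing it, servers hanging as leaves under their smallest assigned class) and labels servers by a depth-first traversal, which is exactly your recursive concatenation of child blocks with leftover servers appended. Your extra care about contiguity at every level and the reduction to the case where class $1,K$ exists matches the paper's construction and its preliminary remark, so no comparison beyond this is needed.
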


\begin{proof}
  We first remark that a nested pool has a natural tree structure, which can be built as follows. The nodes are the servers and the  job classes. The parent of a server is the smallest class, in the sense of inclusion, assigned to it. The parent of a class is the smallest class that includes it, if any. By construction, servers are always leaves, while the tree root is exactly the class that is maximal for the inclusion, i.e., class  $ 1,K $.

  To conclude, we just have to label the servers in their order of appearance in a depth-first traversal of the tree. By construction, the servers assigned to a given class, which are exactly the leaves of the subtree rooted in that class, will have consecutive labels.
\end{proof}

An example of nested pool is given in Figure \ref{fig:nested}.

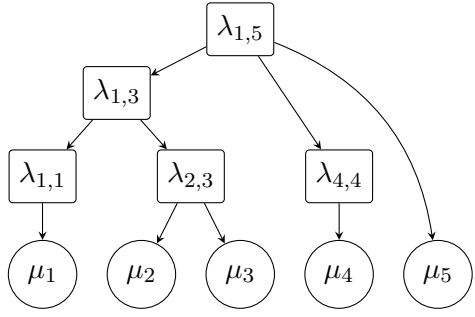
\begin{figure}[h]
  \centering
  \begin{tikzpicture}[scale = 1.3, draw, node distance=2cm,auto,>=stealth]
    \node[class] (lr) at (0,2.5) {$\lambda_{1,5}$};
    \node[class] (la) at (-1.25,1.85) {$\lambda_{1,3}$};
    \node[class] (lb) at (-2,1) {$\lambda_{1,1}$};
    \node[class] (lc) at (-.5,1) {$\lambda_{2,3}$};
    \node[class] (ld) at (1,1) {$\lambda_{4,4}$};	

    \node[server] (u1) at (-2,0) {$ \mu_1 $};
    \node[server] (u2) at (-1,0) {$ \mu_2 $};
    \node[server] (u3) at (0,0) {$ \mu_3 $};
    \node[server] (u4) at (1,0) {$ \mu_4 $};
    \node[server] (u5) at (2,0) {$ \mu_5 $};

    \draw (lr) edge (la) edge (ld) edge[bend left] (u5)
    (la) edge (lb) edge (lc)
    (lb) edge (u1)
    (lc) edge (u2) edge (u3)
    (ld) edge (u4);
  \end{tikzpicture}
  \caption{Tree representation of a nested pool}
  \label{fig:nested}
\end{figure}

Nested pools 
are another good example of application of our recursive formula.
It was shown in \cite{G17-2} that a nested system is empty with probability
\begin{equation}
\label{eq:psi_nested}
\psi = \prod_{i \in \mathcal{I}} (1 - \rho_{|i})\text{,}
\end{equation}
\noindent where
$$
\rho_{|i} = \frac{\lambda_i}{\sum_{k \in \mathcal{K}_i} \mu_k - \sum_{j : \mathcal{K}_j \subsetneq \mathcal{K}_i} \lambda_j}
$$
is the load associated to class $i$
 in the system restricted to  servers $\mathcal{K}_i$.
With our recursive  approach, proving \eqref{eq:psi_nested} becomes quite straightforward.
Using \eqref{eq:emptyderivative}, we get
\begin{align*}
{\psi} &=\frac
  { M(\mathcal{K}) - \Lambda(\mathcal{I}) }
  { \sum_{k \in \mathcal{K}} \frac{\mu_k}{\psi_{|-k}} }
=({1-\rho_{|1,K}}) \frac
{ M(\mathcal{K}) - \Lambda(\mathcal{I}\setminus \{1,K\}) }
{ \sum_{k \in \mathcal{K}} \frac{\mu_k}{\psi_{|-k}} }.
\end{align*}
We then remark that if any server $ k $ is idle, so is class $ 1,K $. Hence the second factor of the expression above is exactly the right-hand side of \eqref{eq:emptyderivative} for a system where class $ 1,K $ is removed, that is $ \psi_{|-1,K} $. Thus, $\psi= (1-\rho_{|1,K})\psi_{|-1,K}$, from which  \eqref{eq:psi_nested} follows.

Note that \eqref{eq:psi_nested} can also be proved with a more class-oriented approach.
Conditioning on the activity of class $1,K$, we get the conservation equation:
$$
\sum_{i \in \mathcal{I}} \lambda_i
= \left( \sum_{k \in \mathcal{K}} \mu_k \right) (1 - \psi_{1,K}) + \left( \sum_{i \in \mathcal{I} \setminus \{1,K\}} \lambda_i \right) \psi_{1,K}\text{.}
$$  
Rearranging the terms gives:
$$
\psi_{1,K} = \frac
{ M(\mathcal{K}) - \Lambda(\mathcal{I}) }{M(\mathcal{K}) - \Lambda(\mathcal{I}\setminus\{1,K\})}
= 1 - \rho_{|1,K}.
$$
The result then follows from 
the equality $\psi = \psi_{1,K} \psi_{|-1,K}$.

These proofs give some insight on the factors in \eqref{eq:psi_nested}. For example, we see that the equality $\psi_{i,j} = 1-\rho_{|i,j}$ is only true when $i,j= 1,K$. Indeed, the proof consists in \emph{removing} the classes one after the other in a graph traversal, showing that $1 - \rho_{|i,j}$ is the probability that class $i,j$ is idle,
given that all its ancestors in the tree (if any) are idle. 

The mean number of jobs of each class, which was also given in \cite{G17-2},
can be derived using \eqref{eq:meani_s}.
It is a special case of
\eqref{eq:range_single_class_v2}
that will be stated for the line pools.

\subsection{Line structures}

We now remove the nested assumption and show how to apply the recursive formula to any line pool. 

\begin{prop}
  \label{prop:arbitrary_ranges}
  The probability that a line system is empty is given by
  \begin{equation}
    \label{eq:range_norm_v2}
    \psi=\frac{M(\mathcal{K})-\Lambda(\mathcal{I})}{\sum_{k=1}^K\frac{\mu_k}{\psi_{|1..k-1}\psi_{|k+1..K}}},
  \end{equation}
  where for each $k,\ell$, $|k..\ell$ denotes the system reduced to servers $k$ to $\ell$.

  For each $i,j \in \mathcal{I}$, the mean number of class-$i,j$ jobs satisfies
  \begin{equation}
    \label{eq:range_single_class_v2}
    L_{i,j} =
    \frac{\lambda_{i,j}+\psi\left(
    \sum\limits_{k = 1}^{i-1}\mu_k\frac{L_{{i,j}|k+1..K}}{\psi_{|1..k-1}\psi_{|k+1..K}}
    +
    \sum\limits_{k = j+1}^K\mu_k\frac{L_{{i,j}|1..k-1}}{\psi_{|1..k-1}\psi_{|k+1..K}}
    \right)
    }{M(\mathcal{K})-\Lambda(\mathcal{I})}\text{,}
  \end{equation}
  while the total mean number of jobs satisfies
  \begin{equation}
    \label{eq:range_total_mean_v2}
    L =
    \frac{\Lambda(\mathcal{I}) + \psi \sum\limits_{k=1}^{K}\mu_k\frac{L_{|1..k-1}+L_{|k+1..K}}{\psi_{|1..k-1}\psi_{|k+1..K}}
    }{M(\mathcal{K})-\Lambda(\mathcal{I})}\text{.}
  \end{equation}
  \noindent 
\end{prop}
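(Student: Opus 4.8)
The plan is to apply Theorems \ref{theo:empty} and \ref{theo:mean} essentially verbatim, the only line-pool-specific ingredient being an explicit description of the reduced system $|-k$ in terms of the two half-line systems $|1..k-1$ and $|k+1..K$. The governing observation is that in a line pool a class is an interval, so a class $i,j$ fails to be assigned server $k$ precisely when $[i,j]$ lies strictly to one side of $k$, i.e.\ when $j < k$ or $i > k$. Consequently the classes surviving the removal of server $k$ (those not containing $k$) split into the classes contained in $\{1,\ldots,k-1\}$ and those contained in $\{k+1,\ldots,K\}$, and these two families use disjoint server sets. By the very definition of the $|\mathcal{L}$ operator, the reduced system $|-k$ is then exactly the disjoint union of $|1..k-1$ and $|k+1..K$ (with the convention that an empty server range yields an empty subsystem, for which $\psi=1$ and $L=0$, covering $k=1$ and $k=K$).

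The step I expect to be the crux is justifying that this disjoint union behaves as two \emph{independent} subsystems, so that the relevant quantities factor or add. I would argue this at the level of the balance function: writing $\Phi_L$, $\Phi_R$ for the balance functions of the two half-line systems and $x_L$, $x_R$ for the restrictions of $x$ to the two server groups, the product $\Phi_L(x_L)\,\Phi_R(x_R)$ satisfies the defining recursion \eqref{eq:Phi} of the combined system. Indeed, since no surviving class spans $k$, the total rate $M\bigl(\bigcup_{i:x_i>0}\mathcal{K}_i\bigr)$ splits additively across the two groups, and substituting the product into \eqref{eq:Phi} reproduces the two subsystem recursions term by term; by uniqueness from the common base case $\Phi(0)=1$, the balance function of $|-k$ is exactly this product. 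Hence $|-k$ factors into two independent balanced-fairness systems, so that
\begin{equation*}
  \psi_{|-k} = \psi_{|1..k-1}\,\psi_{|k+1..K}
  \quad\text{and, by \eqref{eq:subsystem},}\quad
  \psi_k = \frac{\psi}{\psi_{|1..k-1}\,\psi_{|k+1..K}}.
\end{equation*}

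With this in hand, \eqref{eq:range_norm_v2} is just \eqref{eq:emptyderivative} with $\psi_{|-k}$ substituted and $k$ summed over $1,\ldots,K$. For the per-class count I would start from \eqref{eq:meani_s} applied to class $i,j$, using $\mathcal{K}\setminus\mathcal{K}_{i,j} = \{1,\ldots,i-1\}\cup\{j+1,\ldots,K\}$. For $k \le i-1$ the class $i,j$ survives in the right half-system only, so $L_{i,j|-k} = L_{i,j|k+1..K}$ (the left half contains no class-$i,j$ job); symmetrically $L_{i,j|-k} = L_{i,j|1..k-1}$ for $k \ge j+1$. Inserting these together with the expression for $\psi_k$ and factoring out $\psi$ yields \eqref{eq:range_single_class_v2}. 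Finally, \eqref{eq:range_total_mean_v2} follows the same way from \eqref{eq:mean_s}: independence of the two half-systems makes the total count additive, $L_{|-k} = L_{|1..k-1} + L_{|k+1..K}$, and substituting this and $\psi_k$ into \eqref{eq:mean_s} gives the stated formula. The remaining manipulations are purely algebraic.
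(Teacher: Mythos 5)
Your proof is correct and follows essentially the same route as the paper's: the key observation that removing server $k$ from a line pool splits the surviving interval classes into two independent half-line pools, giving $\psi_{|-k}=\psi_{|1..k-1}\,\psi_{|k+1..K}$, $L_{i,j|-k}=L_{i,j|k+1..K}$ or $L_{i,j|1..k-1}$ according to whether $k<i$ or $k>j$, and $L_{|-k}=L_{|1..k-1}+L_{|k+1..K}$, then substituting into \eqref{eq:emptyderivative}, \eqref{eq:meani_s} and \eqref{eq:mean_s}. The only difference is that you explicitly justify the independence through the factorization $\Phi(x)=\Phi_L(x_L)\Phi_R(x_R)$ of the balance function, which the paper asserts without proof; that verification is sound and a welcome addition rather than a deviation.
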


\begin{proof}
  The key of the proof is that when we remove some server $ k $ from a line pool, we get two independent line pools, in the sense that the remaining classes $ \mathcal{I} \setminus \mathcal{I}_k $ are split into two sets: those processed by servers $1$ to $k-1$ and those processed by servers $k+1$ to $K$. This yields $ \psi_{|-k} = \psi_{|1\ldots k-1} \psi_{|k+1 \ldots K}$. Equation \eqref{eq:range_norm_v2} then follows from \eqref{eq:emptyderivative}. 

  For the mean number of jobs, we have
  $$
  L_{{i,j}|-k} = 
  \left\{\begin{array}{ll}
    L_{{i,j}|k+1..K} & \text{if } k<i,\\
    L_{{i,j}|1..k-1} & \text{if } k>j,\\
    0 & \text{otherwise,}
  \end{array}\right.
  $$
  and 
  $ L_{|-k} = L_{|1..k-1}+L_{|k+1..K} $.
  The recursive formulas \eqref{eq:range_single_class_v2} and \eqref{eq:range_total_mean_v2} then follow from \eqref{eq:meani_s} and \eqref{eq:mean_s}, respectively.
\end{proof}

In view of Proposition \ref{prop:arbitrary_ranges}, the computation of $\psi$ can be done in time $O(K^3)$. First,  we precompute the total arrival and service rates of all pools reduced to servers $k$ to $\ell$, for all $k$ and $\ell$ such that $1\le k \le \ell \le K$, which incurs a cost in $O(K^2)$.
 Then,  the computational cost of each term $ \psi_{|k..\ell} $ is in $ O(K) $, and there are $ O(K^2) $ such terms, hence a global cost $ O(K^3) $. Keeping the different values of 
$ \psi_{|k..\ell} $ in memory, the same complexity argument  holds for \eqref{eq:range_single_class_v2} and \eqref{eq:range_total_mean_v2}.
The mean number of jobs of any class $ {i,j} $ and the total number of jobs can be computed in time $ O(K^3) $.

Note that Proposition \ref{prop:neste_is_range} ensures that the above recursive formulas also apply to nested pools. However, the equations derived in \S\ref{sebsec:nested} for nested pools are simpler to compute: for example, using the tree structure of the classes, one can verify that the computational cost of \eqref{eq:psi_nested} is $ O(IK) $, against $ O(K^3) $ for a generic line pool. Reminding that $ I=O(K^2) $, with $I$ possibly much lower than $K^2$, nested formulas should be preferred when the pool is nested.

It is tempting to adapt the method presented here for line pools to other topologies. For example, one could consider a grid structure where job classes would correspond to rectangles of servers. However, the method does not apply, as removing a server does not  yield to  independent sub-systems in general. A notable exception, considered in \S\ref{sec:ring-structure}, is the ring topology.

\subsection{Load balancing}\label{sec:randomized-load-balancing}

Section \ref{sec:randomizeds} investigated randomized assignment where a fixed number of servers were chosen at random in a pool. We show here that these results can be transposed to line pools. We only consider the homogeneous case; the cases of  heterogeneous degrees or servers can be treated in the same way.

As in \S \ref{sec:homogeneous}, we consider $ K $ servers, each with service rate $ \mu$. Jobs arrive at rate $ K\lambda $, so that the system load is $\rho = \lambda / \mu$. Upon arrival, each job is assigned to a range of $ d\leq K $ servers chosen uniformly at random among the $ I = K-d+1 $ possible ranges of size $ d $, so that the arrival rate for each of the $ I $ classes is $ \frac{K\lambda}{K-d+1} $. An example is pictured in Figure \ref{fig:rangebalancing}. As $ d $ is constant, we label a class by its lowest server, i.e., $ i $ instead of $ {i,i+d-1} $. Remark that, unlike what happens for the non-local case considered in \S \ref{sec:homogeneous}, the classes are not equivalent. For example, in Figure \ref{fig:rangebalancing}, class $ 1 $ has an exclusive use of server $ 1 $, while class $ 3 $ shares its three  servers. 

\begin{figure}[h]
  \centering
  \begin{tikzpicture}[scale = 1.3, draw, node distance=2cm,auto,>=stealth]
    \def\K{6}
    \pgfmathsetmacro\res{(\K+1)/2}
    \node[type] (toto) at (1.2*\res,2) {$ K\lambda $};
    \foreach \n in {1,...,\K}
    \node[server] (u\n) at (1.2*\n,0) {$ \mu $};
    \pgfmathtruncatemacro\res{\K-1}
    \foreach \n in {2,...,\res}{
      \node[class] (l\n) at (1.2*\n,1) {$ \frac{\lambda}{1-\frac{d-1}{K}}$};
    \draw (toto) edge[densely dashed] (l\n);
    \foreach \i in {-1,0,1}
    \pgfmathtruncatemacro\res{\n+\i}
    \draw (l\n) edge (u\res);
  }
  \end{tikzpicture}
	\caption{A homogeneous line pool}
	\label{fig:rangebalancing}
\end{figure}
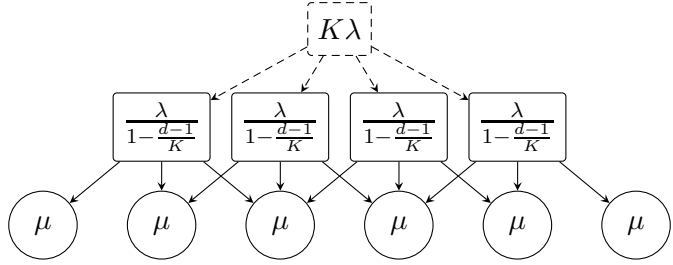

\begin{prop}
	\label{prop:range_homo}
	The probability that the system is empty is $\psi = \psi_{|1..K}$, where 
	$\psi_{|1..\ell}$ can be computed recursively by $\psi_{|1..\ell} = 1$ for $\ell < d$ and 
	\begin{equation}
	\label{eq:empty_range_homo}
	\psi_{|1..\ell} = 
	\frac{1-\rho_{|1..\ell}}{\frac 1 \ell \sum_{k=1}^{\ell}\frac{1}{\psi_{|1..k-1}\psi_{|1..\ell-k}}}
	\text{,}
  \end{equation}
  for $\ell \ge d$, where 
  $$
  \rho_{|1..\ell} =  \frac{1-\frac{d-1}{\ell}}{1-\frac{d-1}{K}}\rho.
  $$
	
	For each $i\in 1, \ldots, I$, the mean number of class-$i$ jobs  is $L_{i} = L_{i|1..K}$, where 
	$L_{i|1..\ell}$ can be computed recursively by $L_{i|1..\ell} = 0$ for $\ell < d$ and
	\begin{equation}
	\label{eq:single_range_homo}
	L_{i|1..\ell} =
	\frac{\frac{\rho_{|1..K}}{1-\frac{d-1}{K}}		
		+\psi_{|1..\ell}\left(
		\sum\limits_{k=1}^{i-1}\frac{L_{i-k|1..\ell-k}}{\psi_{|1..k-1}\psi_{|1..\ell-k}}
		+
		\sum\limits_{k=i+d}^{\ell}\frac{L_{i|1..k-1}}{\psi_{|1..k-1}\psi_{|1..\ell-k}}
		\right)
    }{\ell(1-\rho_{|1..\ell})}
  \end{equation} 
  for $\ell \ge d$.

  The total mean number of jobs in the system is $L = L_{|1..K}$, with $L_{|1..\ell} = 0$ if $\ell < d$ and 
  \begin{equation}
    \label{eq:total_range_homo}
    L_{|1..\ell} = \frac{
      \rho_{|1..\ell}+\frac{\psi_{|1..\ell}}{\ell}\sum_{k=1}^{\ell}\frac{L_{|1..k-1}+L_{|1..\ell-k}}{\psi_{|1..k-1}\psi_{|1..\ell-k}}
    }{1-\rho_{|1..\ell}
    }
  \end{equation}
  otherwise.
\end{prop}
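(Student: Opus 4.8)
The plan is to derive all three formulas as specializations of Proposition \ref{prop:arbitrary_ranges} to the translation-invariant structure of the homogeneous line pool. The one essential new ingredient is a symmetry argument that collapses the two-index restricted quantities $\psi_{|k..\ell}$ and $L_{\cdot|k..\ell}$ into one-index quantities indexed only by the number of retained servers; everything else is substitution and cancellation.

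The first step is to record that symmetry. When we restrict to servers $k$ to $\ell$, the surviving classes are exactly those whose $d$-range lies in $\{k,\ldots,\ell\}$, namely classes $k,\ldots,\ell-d+1$; each keeps the arrival rate $\frac{K\lambda}{K-d+1}$ and each server keeps rate $\mu$. Shifting every label by $k-1$ shows this restriction is identical to the restriction to servers $1$ to $\ell-k+1$, so $\psi_{|k..\ell}=\psi_{|1..(\ell-k+1)}$ and, after the corresponding relabeling, $L_{\cdot|k..\ell}=L_{\cdot|1..(\ell-k+1)}$. In particular $\psi_{|k+1..\ell}=\psi_{|1..(\ell-k)}$, and a class $i$ of the pool on $k+1,\ldots,\ell$ corresponds to class $i-k$ of the pool on $1,\ldots,\ell-k$. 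At this stage I would also compute the restricted load: the system on servers $1,\ldots,\ell$ has $\ell-d+1$ classes of rate $\frac{K\lambda}{K-d+1}$ and total service rate $\ell\mu$, whence $\rho_{|1..\ell}=\frac{(\ell-d+1)K}{\ell(K-d+1)}\rho$, which is the stated expression.

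I would then apply Proposition \ref{prop:arbitrary_ranges} to the system restricted to servers $1,\ldots,\ell$. For \eqref{eq:empty_range_homo} I start from \eqref{eq:range_norm_v2}, set every $\mu_k=\mu$, write $M(\mathcal{K})-\Lambda(\mathcal{I})=\ell\mu(1-\rho_{|1..\ell})$, replace $\psi_{|k+1..\ell}$ by $\psi_{|1..(\ell-k)}$, and cancel the common factor $\mu$. For \eqref{eq:single_range_homo} I specialize \eqref{eq:range_single_class_v2} to class $i$ with range $i,\ldots,i+d-1$: the term $\lambda_i/\mu$ equals $\frac{\rho}{1-(d-1)/K}=\frac{\rho_{|1..K}}{1-(d-1)/K}$ since $\rho_{|1..K}=\rho$; in the left sum the relabeling $i\mapsto i-k$ turns $L_{i,j|k+1..\ell}$ into $L_{i-k|1..(\ell-k)}$, in the right sum $L_{i,j|1..k-1}=L_{i|1..(k-1)}$ and the range $k=j+1,\ldots,\ell$ becomes $k=i+d,\ldots,\ell$; dividing numerator and denominator by $\mu$ yields the claim. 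Formula \eqref{eq:total_range_homo} follows identically from \eqref{eq:range_total_mean_v2}, now using $\Lambda/\mu=\ell\rho_{|1..\ell}$, $L_{|k+1..\ell}=L_{|1..(\ell-k)}$, and dividing numerator and denominator by $\ell\mu$. The base cases are immediate: for $\ell<d$ no $d$-range fits in $\ell$ servers, so the restricted system receives no traffic, giving $\psi_{|1..\ell}=1$ and $L_{\cdot|1..\ell}=0$.

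The algebraic simplifications (cancelling $\mu$ and $\ell\mu$, rewriting $\rho_{|1..\ell}$) are harmless; the step demanding care, which I regard as the crux, is the bookkeeping of the class relabeling under translation. One must verify that removing server $k$ genuinely splits the line into the independent pools on $1,\ldots,k-1$ and $k+1,\ldots,\ell$ already exploited in \eqref{eq:range_norm_v2}, and that the right-hand pool contributes $L_{i-k|1..(\ell-k)}$ rather than $L_{i|1..(\ell-k)}$. Misaligning this shift, or the summation bounds $k\le i-1$ and $k\ge i+d$, is the only place the derivation can go wrong.
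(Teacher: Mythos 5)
Your derivation is correct and takes the same route as the paper, whose entire proof is the observation that the pool restricted to servers $\ell+1$ to $K$ is equivalent, by translation, to the pool restricted to servers $1$ to $K-\ell$, combined with Proposition \ref{prop:arbitrary_ranges}. You merely make explicit the bookkeeping the paper leaves implicit (the identities $\psi_{|k+1..\ell}=\psi_{|1..\ell-k}$ and $L_{i|k+1..\ell}=L_{i-k|1..\ell-k}$, the restricted load $\rho_{|1..\ell}$, and the cancellation of $\mu$), all of which checks out against \eqref{eq:range_norm_v2}--\eqref{eq:range_total_mean_v2}.
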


\begin{proof}
  The result follows from Proposition \ref{prop:arbitrary_ranges} on observing that the pool restricted to servers $\ell +1$ to $K$ is equivalent to the pool restricted to servers $1$ to $K-\ell$, for any $\ell  < K$.
\end{proof}

The recursions \eqref{eq:empty_range_homo} and \eqref{eq:total_range_homo} use $ O(K) $ values of $ \psi $ and $ L $, and each of them can be computed in $ O(K) $ if previous results are kept in memory, resulting in $ O(K^2) $ time complexity. For \eqref{eq:single_range_homo}, there are $ O(K^2) $ values to compute, hence a computational cost in $ O(K^3) $.
Despite important symmetries, there is no complexity gain for per-class performance compared to the general case, because classes remain heterogeneous. However, an improvement of factor $ K $ is achieved for the global indicators $\psi$ and $L$.

It is worth noting that, despite heterogeneity, the stability condition is simply $ \rho<1 $. The reason is that the sub-systems are less loaded than the main system: the sub-system restricted to servers $1$ to $\ell$ has load $\rho_{|1..\ell}\le \rho$.

\subsection{Ring structure}\label{sec:ring-structure}

To suppress the class asymmetry inherent to line pools, we now consider a ring pool where servers 1 and $K$ are at distance 1, as illustrated  in Figure \ref{fig:ring}. To simplify formulas, we use implicit congruence modulo $ K $: server $ K+i $ is server $ i $, and for $1\leq j<i\leq K $, $ i,\ldots,j $ denotes the servers $i$ to $K$ and $1$ to $j$. For example, in Figure \ref{fig:ring}, class-$ {5,2} $ jobs are assigned servers 5, 1 and 2.

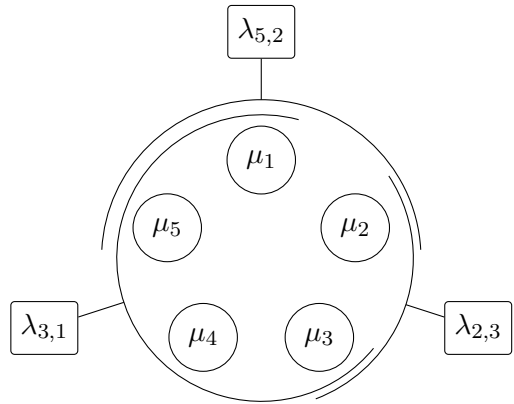
\begin{figure}[h]
  \centering
  \begin{tikzpicture}[scale = 1,draw, node distance=2cm,auto,>=stealth]

    \foreach \n/\a in {1/90,2/18,3/306,4/234,5/162}
    \node[server] (u\n) at (\a:1.3) {$ \mu_{\n} $};

    \node[class] (l24) at (342:3) {$ \lambda_{{2,3}} $};
    \draw[shift = (33:2)] (0:0) arc (33:-69:2) ;
    \draw (l24) -- (342:2);

    \node[class] (l41) at (-162:3) {$ \lambda_{{3,1}} $};
    \draw[shift = (75:1.9)] (0:0) arc (75:321:1.9) ;
    \draw (l41) -- (-162:1.9);

    \node[class] (l52) at (90:3) {$ \lambda_{{5,2}} $};
    \draw[shift = (3:2.1)] (0:0) arc (3:177:2.1) ;
    \draw (l52) -- (90:2.1);

  \end{tikzpicture}
  \caption{A ring pool}
  \label{fig:ring}
\end{figure}

The following result is a  simple consequence of  Proposition \ref{prop:arbitrary_ranges}, after noticing that removing server $ k $ from a ring gives the line pool $ k+1,\ldots,k-1 $.

\begin{prop}
  \label{theo:arbitrary_rings}
  The probability that a ring system is empty is given by
  \begin{equation}
    \label{eq:ring_empty}
    {\psi} = \frac{M(\mathcal{K})-\Lambda(\mathcal{I})}{\sum_{k=1}^K\frac{\mu_k}{\psi_{|k+1..k-1}}}\text{,}
  \end{equation}
  \noindent where $\psi_{|k+1..k-1}$ is obtained using \eqref{eq:range_norm_v2}.

  For each ${i,j} \in \mathcal{I}$, the mean number of class-${i,j}$ jobs in the system is given by
  \begin{equation}
    \label{eq:ring_single_class}
    {L}_{{i,j}} =
    \frac{\lambda_{{i,j}}+ {\psi}
    \sum_{k\in j+1..i-1}\mu_k\frac{L_{{i,j}|k+1..k-1}}{\psi_{|k+1..k-1}}
    }{M(\mathcal{K})-\Lambda(\mathcal{I})}\text{,}
  \end{equation}
  \noindent  and the total mean number of jobs in the system is
  \begin{equation}
    \label{eq:ring_total_mean}
    {L} =
    \frac{\Lambda(\mathcal{I}) +  {\psi} \sum_{k=1}^{K}\mu_k\frac{L_{|k+1..k-1}}{\psi_{|k+1..k-1}}
    }{M(\mathcal{K})-\Lambda(\mathcal{I})}\text{,}
  \end{equation}
  \noindent 
  where $L_{{i,j}|k+1..k-1}$ and $L_{|k+1..k-1}$ are obtained using \eqref{eq:range_single_class_v2} and \eqref{eq:range_total_mean_v2}. 
\end{prop}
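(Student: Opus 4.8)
The plan is to derive Proposition~\ref{theo:arbitrary_rings} as a direct corollary of Proposition~\ref{prop:arbitrary_ranges}, using the single structural observation stated just before the proposition: removing server~$k$ from a ring pool yields the line pool on servers $k+1,\ldots,k-1$ (indices modulo $K$). The entire argument hinges on showing that this decomposition lets us reuse the recursive machinery already established for line pools, so that each quantity computed for the ring reduces to a quantity computable by Proposition~\ref{prop:arbitrary_ranges} applied to a line subsystem.

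First I would establish the base identity $\psi_{|-k} = \psi_{|k+1..k-1}$. This is the ring analogue of the key step in the proof of Proposition~\ref{prop:arbitrary_ranges}: when we delete server~$k$ from the ring and remove all classes in $\mathcal{I}_k$ (the classes assigned server~$k$), the remaining classes are precisely those whose assigned arc avoids server~$k$, i.e.\ those contained in the arc from $k+1$ to $k-1$. Since no class can wrap \emph{around} server~$k$ once $k$ is gone, this residual system is an ordinary line pool on the $K-1$ consecutive servers $k+1,\ldots,k-1$, with no further splitting (unlike the line case, where deleting an interior server produced \emph{two} independent pools). Substituting $\psi_{|-k} = \psi_{|k+1..k-1}$ into \eqref{eq:emptyderivative} gives \eqref{eq:ring_empty} immediately, and each factor $\psi_{|k+1..k-1}$ is then evaluated by the line formula \eqref{eq:range_norm_v2}.

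Next I would treat the mean-number-of-jobs formulas by the same reduction. For a fixed class ${i,j}$ and a server $k \notin \mathcal{K}_{i,j}$, the conditional mean $L_{{i,j}|-k}$ equals $L_{{i,j}|k+1..k-1}$, since deleting $k$ again collapses the ring to the line pool on $k+1,\ldots,k-1$ in which class ${i,j}$ still lives. The servers $k$ that can be idle while contributing a nonzero term are exactly those not covering the arc of class ${i,j}$, namely $k \in j+1..i-1$; for $k \in \mathcal{K}_{i,j}$ the class cannot be active when $k$ is idle, contributing nothing, which matches the restriction of the sum in \eqref{eq:ring_single_class}. Feeding $L_{{i,j}|-k} = L_{{i,j}|k+1..k-1}$ and $\psi_{|-k} = \psi_{|k+1..k-1}$ into \eqref{eq:meani_s} yields \eqref{eq:ring_single_class}, and summing over all classes (equivalently, using $L_{|-k} = L_{|k+1..k-1}$ in \eqref{eq:mean_s}) yields \eqref{eq:ring_total_mean}. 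The inner quantities $L_{{i,j}|k+1..k-1}$ and $L_{|k+1..k-1}$ are supplied by \eqref{eq:range_single_class_v2} and \eqref{eq:range_total_mean_v2}.

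The main obstacle, and the one point deserving genuine care, is verifying that the residual system after deleting a ring server is truly a \emph{single} line pool rather than two disconnected pieces. The essential difference from Proposition~\ref{prop:arbitrary_ranges} is topological: on a line, an interior server separates the remaining servers into a left part and a right part, whereas on a ring the remaining servers stay connected as one arc. I would make this precise by checking that every surviving class (one whose assigned arc avoids $k$) is an interval \emph{within} the linearly ordered arc $k+1,\ldots,k-1$, so the reduced pool satisfies the line-pool definition with this relabeling. Once that structural claim is nailed down, the rest is mechanical substitution into Theorems~\ref{theo:empty} and~\ref{theo:mean} via Proposition~\ref{prop:arbitrary_ranges}.
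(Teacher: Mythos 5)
Your proposal is correct and takes essentially the same route as the paper, which derives Proposition~\ref{theo:arbitrary_rings} directly from Proposition~\ref{prop:arbitrary_ranges} via the single observation that deleting server~$k$ from the ring leaves the one connected line pool $k+1,\ldots,k-1$, so that $\psi_{|-k}=\psi_{|k+1..k-1}$ and $L_{|-k}=L_{|k+1..k-1}$ can be substituted into \eqref{eq:emptyderivative}, \eqref{eq:meani_s} and \eqref{eq:mean_s}. Your explicit check that the residual system is a \emph{single} line pool (rather than two independent pieces, as when deleting an interior server of a line) is precisely the structural point the paper invokes implicitly, so the argument is sound.
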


The complexity of each of these recursions is in $O(K^3)$.
For a homogeneous ring of load $\rho$ and range size $d < K$, the complexity is reduced to $O(K^2)$.
In this case, all classes are equivalent and we only need to focus on the global metrics $\psi$ and $L$, given by
\begin{equation}
\label{eq:ring-homogeneous}
 {\psi} = (1-\rho) \psi_{|1..K-1}
\quad \text{and} \quad
 {L} = \frac{\rho}{1-\rho}+L_{|1..K-1},
\end{equation}
where $ \psi_{|1..K-1} $ and $ L_{|1..K-1} $ are the metrics associated with the line system restricted to servers $1,\ldots,K-1$, with load
$$
\rho_{|1..K-1}=  \rho \left(1-\frac{d-1}{K-1}\right).
$$

The ring topology is commonly used in  Distributed Hash Tables (DHT)  to access resources in a decentralized fashion, as in  the Chord protocol \cite{chord}. In the ring space, a portion is a connected subset. We can easily imagine how to use a DHT to dispatch jobs to a pool of servers: it would be enough to let the DHT index the servers; when a job enters the system, it contacts one indexing node that returns the set of servers that are mapped to its monitoring area. The resulting pool would behave exactly like a ring pool.

\section{Numerical Evaluation}\label{sec:numerical-evaluation}

We now illustrate the previous results through  two studies: relevance of the parallelism degree to achieve implicit service differentiation, and performance degradation due to localized load balancing.
Observe that, given the complexity of the involved performance metrics, these studies would not have been possible without our recursive formulas.

For easing the display of the results, the performance of class-$ i $ jobs is quantified by the inverse of the mean response time, $ \frac{1}{T_i} $. This metric also happens to be the mean service rate $\gamma_i$ received by the jobs of class $i$, as we have
\begin{align*}
  \gamma_i
  = \frac{\sum_x \pi(x) \phi_i(x)}{\sum_x \pi(x) x_i}
  = \frac{\lambda_i}{\sum_x \pi(x) x_i}
  = \frac{\lambda_i}{L_i} = \frac{1}{T_i},
\end{align*}
where the second equality holds by the conservation equation.

\subsection{Gain of differentiation}
\label{subsec:gain-of-differentiation}

Consider a resource pool with two types of jobs called \emph{regular} and \emph{premium}.
A natural way of differentiating services consists in
assigning premium jobs to more servers than regular jobs.
We are interested in assessing the actual impact of this approach on  performance.

For the numerical results, we consider $K = 100$ servers with unit service rates;
regular jobs have a parallelism degree $6$ while premium jobs have a degree $12$.
This corresponds to the model of \S \ref{subsec:random-differentiated},
with $N = 2$ job types, regular and premium.
We first focus on the influence of load on the efficiency of the service differentiation.

\paragraph{Impact of load}

Figure \ref{fig:diff1} shows the mean service rates
as a function of the system load $\rho$, for three population distributions:
regular jobs only, premium jobs only,
and a mixed population where regular and premium jobs generate half of the load.

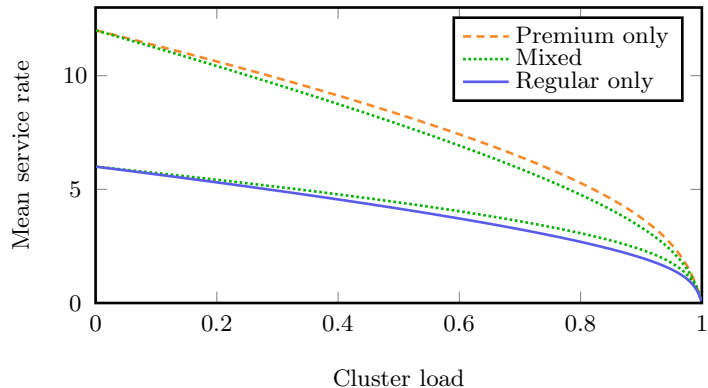
\begin{figure}[h]
\centering
\pgfplotstableread[col sep=comma]{hetero1_100.txt}\t
\begin{tikzpicture}
\begin{axis}[
tikzDefaults,
xlabel={Cluster load}, ylabel={Mean service rate},
xmin=0, xmax=1, ymax = 13,
legend pos = north east,
legend style={
	row sep=-0.4cm,
	font = \footnotesize,
	inner ysep=-.1cm,
	cells={anchor=west, align=left},
},
width=.65\linewidth, height=5.5cm,
]

\addplot[myorange,densely dashed] table [x = rho, y expr = 1/\thisrow{d12}]{\t};
\addlegendentry{Premium only}

\addplot[green!70!black,densely dotted] table [x = rho, y expr = 1/\thisrow{d6m}]{\t};
\addplot[green!70!black,densely dotted,forget plot] table [x = rho, y expr = 1/\thisrow{d12m}]{\t};
\addlegendentry{Mixed}

\addplot[myblue,solid] table [x = rho, y expr = 1/\thisrow{d6}]{\t};
\addlegendentry{Regular only}

\end{axis}
\end{tikzpicture}
\caption{Impact of load on service differentiation for different populations.
  Top plots give the performance of premium jobs and bottom plots that of the regular jobs.}
  \label{fig:diff1}
\end{figure}

The service qualities of the two types are clearly different.
When the load is low, the service rate of premium jobs is roughly twice that of regular jobs.
Intuitively, if the arrivals are rare, then it is likely that a new job finds all its servers free upon arrival.
The ratio between the service rates of premium and regular jobs decreases with the load but remains significant until the load is extreme.
Premium and regular jobs seem to have asymptotically the same service rate as $ \rho $ tends to 1.
This convergence is somehow expected, as maintaining a minimal ratio greater than 1 at very high load
could jeopardize the stability of the system for regular jobs.

Interestingly, the service rate of premium jobs is lower when half of the population consists of regular jobs.
The reason is that the slowness of regular jobs penalizes premium jobs, as they stay longer in the system.
This also explains the gain of performance for regular jobs when population is mixed.
This is particularly visible at higher load, when the job interactions intensify.

\paragraph{Impact of population distribution}

Following up with the last observation,
we focus on the impact of the proportion of regular and premium jobs in the population.
Since we observed that this impact is stronger when the load is high,
Figure \ref{fig:diff2} gives the mean service rate under loads $\rho = 0.9$ and $\rho = 0.99$,
as a function of the ratio of the arrival rate of regular jobs to that of premium jobs.

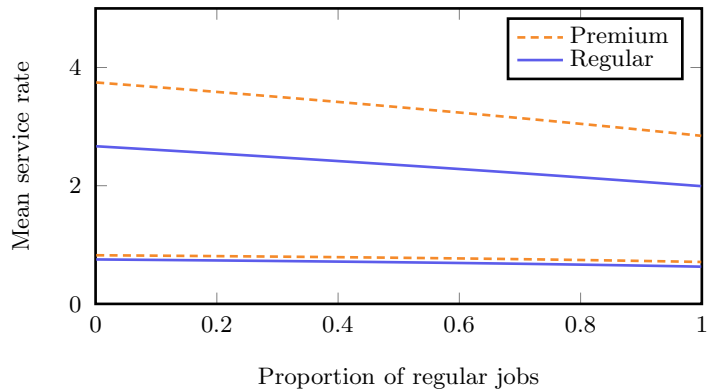
\begin{figure}[h]
\centering
\pgfplotstableread[col sep=comma]{hetero2.txt}\t
\begin{tikzpicture}
\begin{axis}[
tikzDefaults,
xlabel={Proportion of regular jobs}, ylabel={Mean service rate},
xmin=0, xmax=1, ymax = 5,
legend pos = north east,
legend style={
	row sep=-0.4cm,
	font = \footnotesize,
	inner ysep=-.1cm,
	cells={anchor=west, align=left},
},
width=.65\linewidth, height=5.5cm,
]
\addplot[myorange,densely dashed] table [x = p, y expr = 1/\thisrow{d12m}]{\t};
\addlegendentry{Premium}
\addplot[myorange,densely dashed,forget plot] table [x = p, y expr = 1/\thisrow{d12h}]{\t};

\addplot[myblue,solid] table [x = p, y expr = 1/\thisrow{d6m}]{\t};
\addlegendentry{Regular}
\addplot[myblue,solid,forget plot] table [x = p, y expr = 1/\thisrow{d6h}]{\t};
\end{axis}
\end{tikzpicture}
  \caption{Impact of population distribution under two different loads, $\rho = 0.9$ (top plots) and $\rho = 0.99$ (bottom plots).}
  \label{fig:diff2}
\end{figure}

This figure confirms that the differentiation ratio decreases with the load,
and shows that the population distribution has a limited impact on performance.
When $ \rho = 0.9 $, both regular and premium jobs
suffer about 25\% rate degradation
between the best (premium only) and worst (regular only) scenarios.
When $ \rho = 0.99 $, the loss is limited to 14\% approximately.
This relative insensitivity of performance with respect to the population distribution is a positive result since
this distribution may not be known \emph{a priori} by the service provider.

Overall, these results show that randomized load balancing with variable degree of parallelism
is an efficient way of achieving service differentiation despite its simplicity.
We also saw in Section \ref{sec:randomizeds} that it guarantees stability as soon as the overall load is less than $1$.

For other degree parameters, numerical results (not displayed here) are qualitatively similar:  there is a differentiation proportional to the degree ratio when the load is low, which tends to fade at extreme loads.

\subsection{Impact of locality}
\label{subsec:locality}

We now study the impact of locality on the performance of randomized load balancing.
We consider a pool of $K$ homogeneous servers with unit service rate.
Each incoming job is assigned to a set of $d$ servers
chosen uniformly at random among the authorized assignments.
We consider the following assignment configurations,
which were studied in \S \ref{sec:homogeneous}, \S \ref{sec:ring-structure} and \S \ref{sec:randomized-load-balancing}:
\begin{description}
  \setlength\itemsep{0em}
  \item[Global] all sets of $d$ servers among $K$,
  \item[Ring] the sets of $d$ consecutive servers in a ring topology,
  \item[Line] the sets of $d$ consecutive servers in a line topology.
\end{description}
We first investigate the general performance hierarchy between these three configurations.

\paragraph{Costs of heterogeneity and locality}

As observed in \S \ref{sec:randomized-load-balancing},
the performance experienced by a job in a line scenario depends on its assignment.
Figure \ref{fig:load_study1} shows the mean service rate per class in the line,
compared to the overall mean service rate in each scenario.
Performance heterogeneity in the line increases with the load,
which leads to a degradation of the overall performance compared to the other scenarios.
We call this the \emph{cost of heterogeneity}.

\begin{figure}[h]
\centering
\subfloat[\label{fig:load5} Case $ \rho = 0.5 $]{
\pgfplotstableread[col sep=comma]{load_study_K_100_d_10_rho_0.5.txt}\cinq

\begin{tikzpicture}
\begin{axis}[
tikzLocality,
xlabel={Class},
ylabel={Mean service rate},
xmin=1, xmax=91, ymin = 5, ymax = 8,
legend columns=2,
legend style={
	row sep=-0.4cm,
	font = \footnotesize,
	inner ysep=-.1cm,
	cells={anchor=west, align=left},
	anchor=north,
	at={(axis description cs:0.5,.98)}
},
width=.45\linewidth, height=5cm,
]

\addplot[nonloc] table [x = c, y expr = 1/\thisrow{G}]{\cinq};
\addlegendentry{Global}
\addplot[line] table [x = c, y expr = 1/\thisrow{L}]{\cinq};
\addlegendentry{Line}
\addplot[ring] table [x = c, y expr = 1/\thisrow{R}]{\cinq};
\addlegendentry{Ring}
\addplot[line_c] table [x = c, y expr = 1/\thisrow{Li}]{\cinq};
\addlegendentry{Line (per class)}

\end{axis}
\end{tikzpicture}
 }
  \quad
  \subfloat[\label{fig:load9} Case $ \rho = 0.9 $]{
\pgfplotstableread[col sep=comma]{load_study_K_100_d_10_rho_0.9.txt}\neuf
\begin{tikzpicture}
\begin{axis}[
tikzLocality,
xlabel={Class},
xmin=1, xmax=91, ymax = 5,
legend style={
	row sep=-0.4cm,
	font = \footnotesize,
	cells={anchor=west, align=left},
},
width=.45\linewidth, height=5cm,
]
\addplot[nonloc] table [x = c, y expr = 1/\thisrow{G}]{\neuf};
\addplot[ring] table [x = c, y expr = 1/\thisrow{R}]{\neuf};
\addplot[line] table [x = c, y expr = 1/\thisrow{L}]{\neuf};
\addplot[line_c] table [x = c, y expr = 1/\thisrow{Li}]{\neuf};
\end{axis}
\end{tikzpicture}
  }
  \caption{Impact of locality ($ K = 100 $, $ d = 10 $)}
  \label{fig:load_study1}
\end{figure}
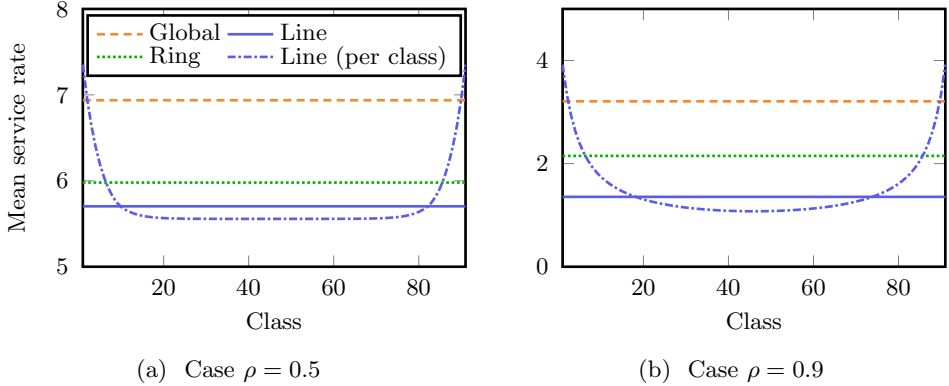

The ring scenario performs better than the line but not as well as the global case.
This is the \emph{cost of locality}, which we interpret as follows: the locality of assignments in line and ring scenarios reduces the diversity of classes compared to a global assignment;  it is more frequent to have two classes sharing a high number of servers,
which degrades the overall performance.

\paragraph{Impact of parameters}

To better understand these phenomena,
we let the parameters $K$, $d$ and $\rho$ vary around the following default values: $K = 100$, $d = 10$ and $\rho = 0.9$.
The results are shown in Figure \ref{fig:load_study2}.
First observe that the hierarchy between line, ring and global allocations is preserved throughout the experiments.

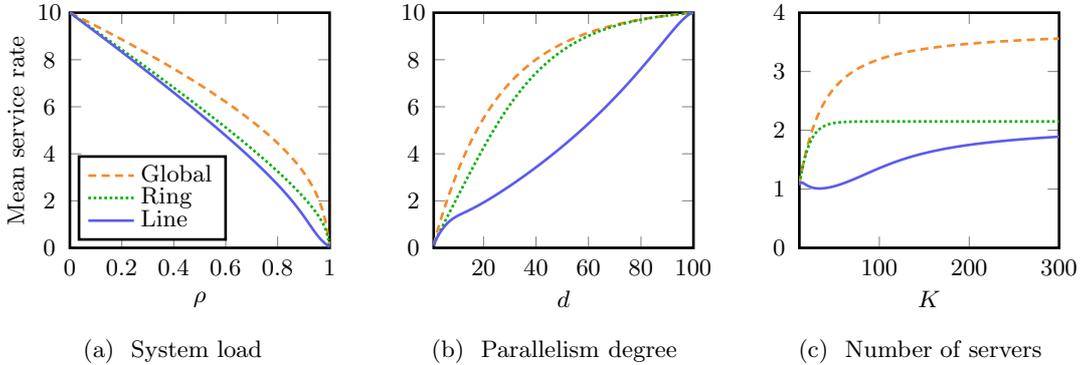
\begin{figure*}[h]
	\centering
	\subfloat[\label{fig:load_rho} System load]{
		\pgfplotstableread[col sep=comma]{load_study_rho_K_100_d_10.txt}\r
		\begin{tikzpicture}
		\begin{axis}[
		tikzLocality,
		xlabel={$ \rho $},
		ylabel={Mean service rate},
		xmin=0, xmax=1, ymax = 10,
		legend pos = south west,
		legend style={
			row sep=-0.4cm,
			font = \footnotesize,
			inner ysep=-.1cm,
			cells={anchor=west, align=left},
		},
		width=.34\linewidth, height=4.7cm,
		]		
		\addplot[nonloc] table [x = rho, y expr = 1/\thisrow{G}]{\r};
		\addlegendentry{Global}
		\addplot[ring] table [x = rho, y expr = 1/\thisrow{R}]{\r};
		\addlegendentry{Ring}
		\addplot[line] table [x = rho, y expr = 1/\thisrow{L}]{\r};
		\addlegendentry{Line}
		\end{axis}
		\end{tikzpicture}
	}
	\hfill
	\subfloat[\label{fig:load_d} Parallelism degree]{
		\pgfplotstableread[col sep=comma]{load_study_d_K_100_rho_0.9.txt}\d
		
		\begin{tikzpicture}
		\begin{axis}[
		tikzLocality,
		xlabel={$ d $},
		xmin=1, xmax=100, ymax = 10,
		width=.34\linewidth, height=4.7cm,
		]
		
		\addplot[nonloc] table [x = d, y expr = 1/\thisrow{G}]{\d};
		\addplot[ring] table [x = d, y expr = 1/\thisrow{R}]{\d};
		\addplot[line] table [x = d, y expr = 1/\thisrow{L}]{\d};
		
		\end{axis}
		\end{tikzpicture}
	}
	\hfill
	\subfloat[\label{fig:load_k} Number of servers]{
		\pgfplotstableread[col sep=comma]{load_study_K_Kmax_300_d_10_rho_0.9.txt}\k
		\begin{tikzpicture}
		\begin{axis}[
		tikzLocality,
		xlabel={$ K $},
		xmin=11, xmax=300, ymax = 4,
		width=.34\linewidth, height=4.7cm,
		]
		\addplot[nonloc] table [x = K, y expr = 1/\thisrow{G}]{\k};
		\addplot[ring] table [x = K, y expr = 1/\thisrow{R}]{\k};
		\addplot[line] table [x = K, y expr = 1/\thisrow{L}]{\k};
		
		\end{axis}
		\end{tikzpicture}
	}
	\caption{Overall impact of the parameters (default values: $ K = 100 $, $ d = 10 $, $ \rho = 0.9 $)}
	\label{fig:load_study2}
\end{figure*}

Figure \ref{fig:load_rho} shows the impact of the load $\rho$ on performance.
The mean service rate in the ring is close to that of the line when the load is low,
but it has the same asymptotic as the mean service rate in the global scenario when the load tends to $1$.
Intuitively, the cost of locality prevails at low load and impacts both the line and the ring;
when the load is higher, the cost of heterogeneity is the main source of performance degradation and impacts only the line.

Figure \ref{fig:load_d} studies the impact of the parallelism degree $d$ on performance.
First observe that the mean service rate increases with the degree in each case.
This increase is much faster in the global and ring scenarios than in the line scenario.
Our interpretation is the following.
In the line scenario,
the total number $K-d+1$ of classes decreases with $d$, hence performance suffers from a lack of diversity in the assignment
compared to the global and ring cases.

Lastly, Figure \ref{fig:load_k} gives the evolution of the performance as a function of the number $K$ of servers.
It was proved in \cite{G17-1} that the mean service rate in the global scenario has a limit when $K$ tends to infinity.
This is consistent with the results of Figure \ref{fig:load_k},
which suggests that a limit also exists in the ring and line scenarios.
Note that the convergence is quite fast in the ring, and non-monotonic in the line.
The behavior for the line can be intuitively explained by the heterogeneity of the number of classes per server: for example, when $ K $ is close to $ d $, a majority of the servers can serve all $ K-d+1 $ classes; when $ K=2d $, there are exactly two servers that can serve $ k $ classes for each $ k = 1,\ldots,d $, so there is a lot of heterogeneity between servers. For larger values of $ K $, the cost of heterogeneity in the line fades away as it becomes a border effect from classes and servers located near the edges; this explains why it seems that the line and ring scenarios share the same limit: as $ K $ increases, only the cost of locality prevails.

All these results show that local load balancing has a cost in terms of performance that depends on the parameters.
However, keeping in mind that a local allocation may be more simple to implement in a real system and has no impact on the stability,
we believe that it can be a viable option.
Remark that whenever possible, a ring structure should be preferred to a line structure.

\section{Conclusion}\label{sec:conclusion}

In this paper, we have considered a resource pool model
where operational constraints like data availability, locality, or allowed degree of parallelism 
are represented by an assignment graph between job classes and servers.
Resources are allocated by applying balanced fairness under these constraints.
Although ideal, this resource allocation can be implemented in practice
by some sequential \gls{FCFS} scheduling at each server.
Our main contribution is a new recursive formula to compute the performance metrics under an arbitrary assignment graph.
The key ingredient is the observation that the idling probability of each server can be derived
by comparing the behavior of the system with and without this server.

Although the complexity of our formula is exponential in the number of servers in general,
it provides a unified framework for analyzing balanced fairness in resource pools,
which allows to simplify the study in many practically interesting cases.
Specifically, we have identified two  classes of models where the equivalence of the servers or the structure of the assignment graph
lead to simplifications making the complexity polynomial, enabling an exact evaluation of their behavior.

For future works,
we would like to identify other classes of resource pools where performance is made tractable by our formula.
We are also interested in deriving more intuition on the impact of the assignment graph on  performance.
We hope that our work will stimulate further studies on resource pools under balanced fairness.

\bibliographystyle{abbrv}

\end{document}